\newtheoremstyle{standard}
{16pt} %Abstand nach oben
{16pt} %Abstand nach oben
{} %Schrift Text
{} %EinrÃ¼cken
{\bfseries}%\normalfont} %Schrift Kopf
{} %Zeichen nach Kopf
{ } %Abstand
{{\thmname{#1~}}{\thmnumber{#2.}}\thmnote{~(#3)}} %Format
\newtheoremstyle{kursiv}
{16pt} %Abstand nach oben
{16pt} %Abstand nach oben
{\itshape} %Schrift Text
{} %EinrÃ¼cken
{\bfseries}%\normalfont} %Schrift Kopf
{} %Zeichen nach Kopf
{ } %Abstand
{{\thmname{#1~}}{\thmnumber{#2.}}\thmnote{~(#3)}} %Format
\theoremstyle{standard}
\newtheorem{defn}{Definition}[section]
\newtheorem{rem}[defn]{Remark}
\newtheorem{setup}[defn]{}
\theoremstyle{kursiv}
\newtheorem{prop}[defn]{Proposition}
\newtheorem{cor}[defn]{Corollary}
\newtheorem{conj}[defn]{Conjecture}
\newcommand{\Lf}{\ensuremath{\mathbf{L}}}
\newcommand{\bC}{\ensuremath{\mathbb{C}}}
\newcommand{\eC}{\ensuremath{\hat{\mathbb{C}}}}
\newcommand{\R}{\ensuremath{\mathbb{R}}}
\newcommand{\N}{\ensuremath{\mathbb{N}}}
\newcommand{\K}{\ensuremath{\mathbb{K}}}
\newcommand{\SSS}{\ensuremath{\mathbb{S}}}
\DeclareMathOperator{\one}{\mathbf{1}}
\DeclareMathOperator{\Diff}{Diff}
\DeclareMathOperator{\Evol}{Evol}
\DeclareMathOperator{\id}{id}
\DeclareMathOperator{\PSL}{\mathrm{PSL}}
\DeclareMathOperator{\Lor}{\mathrm{SO}^+ (3,1)}
\DeclareMathOperator{\SL}{\mathrm{SL}}
\DeclareMathOperator{\Aut}{Aut}
\newcommand{\Frechet}{Fr\'{e}chet}
\newcommand{\LB}[1][\cdot \hspace{1pt} , \cdot]{\left[\hspace{1pt} #1 \hspace{1pt} \right]}
\DeclareMathOperator{\evol}{\mathrm{evol}}
\DeclareMathOperator{\BMS}{BMS}
\DeclareMathOperator{\gBMS}{gBMS}
\newcommand{\scri}{\mathscr{I}}
\newcommand{\conext}{\widehat{M}}
\newcommand{\conmet}{\hat{g}}
\newcommand{\z}{\mathbf{z}}
\begin{document}
\title{Lie Theory for Asymptotic Symmetries in General Relativity: The BMS Group}
\author{David Prinz\footnote{Humboldt-Universität zu Berlin, Germany: \href{mailto:prinz@math.hu-berlin.de}{prinz@math.hu-berlin.de}}\hspace{2mm} and Alexander Schmeding\footnote{Nord universitetet i Levanger, Norway: \href{mailto:alexander.schmeding@nord.no}{alexander.schmeding@nord.no}
}%
}
\date{January 26, 2022}
{\let\newpage\relax\maketitle}

\begin{abstract}
	We study the Lie group structure of asymptotic symmetry groups in General Relativity from the viewpoint of infinite-dimensional geometry. To this end, we review the geometric definition of asymptotic simplicity and emptiness due to Penrose and the coordinate-wise definition of asymptotic flatness due to Bondi et al. Then we construct the Lie group structure of the Bondi--Metzner--Sachs (BMS) group and discuss its Lie theoretic properties. We find that the BMS group is regular in the sense of Milnor, but not real analytic. This motivates us to conjecture that it is not locally exponential. Finally, we verify the Trotter property as well as the commutator property. As an outlook, we comment on the situation of related asymptotic symmetry groups. In particular, the much more involved situation of the Newman--Unti group is highlighted, which will be studied in future work.
\end{abstract}

%\medskip

\textbf{Keywords:} Bondi--Metzner--Sachs group, asymptotically flat spacetime, infinite-dimensional Lie group, analytic Lie group, smooth representation, Baker--Campbell--Hausdorff formula, Trotter product formula

\medskip

\textbf{MSC2020:}
22E66 (primary mathematics), %Analysis on and representations of infinite-dimensional Lie groups
22E65 (secondary mathematics); %Infinite-dimensional Lie groups and their Lie algebras: general properties
83C30 (primary physics), %Asymptotic  procedures  (radiation,  news  functions,H-spaces, etc.)  in general relativity and gravitational theory
83C35 (secondary physics) %Gravitational waves

\tableofcontents

\section{Introduction and statement of results} %\addcontentsline{toc}{section}{Introduction and statement of results}

The experimental verification of gravitational waves at LIGO in 2016 \cite{Abbott_et-al} renewed interest in their theoretical predictions. A natural setup to study gravitational waves are asymptotically flat spacetimes. This is due to the fact that, as described below in detail, asymptotically flat spacetimes correspond to isolated gravitational systems. Thus, they are ideally suited to study outgoing gravitational waves of such systems. A natural question concerning asymptotic flatness is how the asymptotic symmetry group is related to the Poincar\'{e} group \cite{Bondi_VdBurg_Metzner,Sachs}. Additionally to that, the asymptotic symmetry groups provide useful insight into the gravitational \(S\)-matrix via soft-scattering theorems for the corresponding Feynman rules \cite{Weinberg,Strominger,He_et-al}, cf.\ \cite{Prinz} and the references therein. In the present article and \cite{Prinz_Schmeding_2}, we provide the Lie group theoretic foundations for these asymptotic symmetry groups.\smallskip

\textbf{Acknowledgments:} The authors wish to thank G.\ Barnich and H.\ Friedrich for pointing us towards literature we were unaware of and helpful discussions on the subject of this work. DP is supported by the `Kolleg Mathematik Physik Berlin'.

\subsection{Physical motivation}

Asymptotically flat spacetimes model the geometry of isolated gravitational systems, such as solar systems or even complete galaxies. More precisely, the Einstein field equations relate energy-matter contributions to the spacetime geometry as follows:
\begin{equation} \label{eqn:efe}
R_{\mu \nu} - \frac{1}{2} R g_{\mu \nu} + \Lambda g_{\mu \nu} = \kappa T_{\mu \nu} \, ,
\end{equation}
where \(R_{\mu \nu}\) and \(R\) are the Ricci tensor and scalar, respectively, \(\Lambda\) and \(\kappa := 8 \pi G\) are the cosmological and the Einstein gravitational constant, respectively, and \(T_{\mu \nu}\) is the energy-momentum tensor. Thus, if the matter distribution is compactly supported, with the radiation of energy via electromagnetic waves as the only exception, then the spacetime geometry becomes asymptotically Ricci-flat. Furthermore, as the Weyl-curvature describes the radiation of energy via gravitational waves, the spacetime-geometry becomes asymptotically flat, as gravitational radiation is governed by the inverse square law on large scales. Thus, asymptotically flat spacetimes approach Minkowski spacetime `at infinity', and it seems reasonable that the corresponding asymptotic symmetry groups are related to the Poincar\'{e} group. It turns out, however, that the corresponding asymptotic symmetry groups are much richer in their structure: Similar to the Poincar\'{e} group, they are given as the semidirect product of generalized translations, called `supertranslations', and, depending on the boundary conditions, the Lorentz group or a generalization thereof, similarly called `superrotations'.\footnote{We remark that these findings, and thus the corresponding nomenclature, was ahead of supersymmetry and super geometry. Thus, while the name suggests a link between these two subjects, they are in fact unrelated.} Contrary to the translation group of the Poincar\'{e} group, the supertranslations are an infinite dimensional Lie group. We explain this surprising finding after introducing the necessary context. To this end, we start --- opposed to the historic development --- with the geometric approach of Penrose via a conformal extension of spacetime, leading to the notions of asymptotically simple and asymptotically empty spacetimes \cite{Penrose_1,Penrose_2,Penrose_3,Penrose_4}. Then we proceed with the pioneering works of Bondi, van der Burg and Metzner \cite{Bondi_VdBurg_Metzner} and Sachs \cite{Sachs} via their coordinate-wise definition, which can be seen as the choice of a special diffeomorphism `gauge-fixing' additionally to the geometric restrictions described by the construction due to Penrose.

\subsection{Mathematical motivation}

The relationship between symmetries and conservation laws has a long standing history, with Noether's Theorem (1918) as an important milestone \cite{Noether}. Symmetries can be modeled via the action of Lie groups. Especially in the context of variations and gauge theories (e.g.\ via BRST cohomology) these symmetries are typically studied infinitesimally, i.e.\ via the action of the corresponding Lie algebras. In finite dimensions there is a strong connection between Lie groups and their Lie algebras. For example if \(G\) is a matrix Lie group with Lie algebra \(\mathfrak{g}\). Then, the Lie exponential is the matrix exponential
\begin{align}
\operatorname{exp} \colon \quad \mathfrak{g} \to G \, , \quad Z \mapsto \sum_{k = 0}^\infty \frac{1}{k!} Z^k \, , \label{eqn:exp}
\end{align}
and it induces a diffeomorphism (exponential coordinates) between a $0$-neighborhood of the Lie algebra and an identity-neighborhood in $G$. The infinite-dimensional case is much more subtle as this correspondence fails in general. The Lie exponential is not guaranteed to be locally surjective \cite{Milnor,Grabowski}. This happens for example for the diffeomorphism group \(\operatorname{Diff} (M)\) of a smooth manifold \(M\) (see \cite{Sch15} and \cite{KaM97}). To recover Lie theoretic statements in the infinite-dimensional setting, an evolution operator for smooth Lie algebra valued curves becomes important. Sticking to common notation in the physics literature, this evolution operator is 
\begin{align}
\operatorname{Evol} \colon C^\infty (I, \mathfrak{g}) \to C^\infty (I, G) \, , \quad \gamma (t) \mapsto \mathcal{T} \operatorname{exp} \left ( \int_0^t \gamma (\tau) \dif \tau \right ) \, , \label{eqn:evol}
\end{align}
where \(I \subset \mathbb{R}\) is a compact interval, and the right hand side a time ordered exponential. In mathematical terms, the evolution operator computes the solution to an initial value problem given by the Lie type equation
\begin{equation}
\begin{cases} \dot{\eta} &= \eta . \gamma,\\ \eta (0) &=1\end{cases} \label{eqn:Lie_type_ODE}
\end{equation}
(where the lower dot signifies the derivative of the left translation by $\eta$). A Lie group which admits a smooth evolution operator in the sense of Equation~\eqref{eqn:evol} is called regular in the sense of Milnor \cite{Milnor}, cf.\ \cite[Chapter 38]{KaM97} and \cite{Neeb06}. We remark that the evolution operator from Equation~(\ref{eqn:evol}) specializes to the Lie exponential from Equation~\eqref{eqn:exp} for constant curves \(\gamma (t) \equiv Z\). Note that the Lie groups we are about to consider are modeled on locally convex spaces which are not Banach spaces. Thus there is no general solution theory for ordinary differential equations, such as Equation~\eqref{eqn:Lie_type_ODE}, on these spaces guaranteeing regularity. 

Mathematically speaking the structure of the BMS group is quite simple. Let us explain this in the three dimensional toy case, i.e.\ for the group $\BMS_3$. It is well known that $\BMS_3$ can be identified as the semidirect product $\mathcal{V} (\SSS^1) \rtimes_{\text{Ad}}  \Diff (\SSS^1)$ where the diffeomorphism group $\Diff (\SSS^1)$ acts on the vector fields via the adjoint action. Here the vector fields are interpreted as the Lie algebra of $\Diff (\SSS^1)$ and for the group structure as an abelian Lie group. Hence the Lie theory for $\BMS_3$ boils down to the Lie theory of $\Diff (\SSS^1)$ (and its adjoint action). In particular, $\BMS_3$ inherits several pathologies from $\Diff (\SSS^1)$: It is well known that $\Diff (\SSS^1)$ does not admit exponential coordinates (see e.g.\ \cite[Example 43.2]{KaM97}). Hence also the semidirect product $\BMS_3$ does not admit a (locally) surjective Lie group exponential. 
Note that for four dimensional spacetimes, the BMS group arises as a semidirect product of a finite dimensional Lie group with a complete metrizable space (also known as a \Frechet~space). Thus both parts of the semidirect product admit good Lie group exponentials and we discuss the problem of the existence of exponential coordinates for the BMS group in \Cref{sect:BMS}. Let us remark that for $\BMS_3$ and the BMS group it is essential to work with smooth supertranslations (i.e. the abelian part of the semidirect product is a space of smooth mappings). This settles the question as to how smooth the supertranslations need to be (see e.g.\ \cite{McCar72b,McCar72a,Girardello:1974sq} where finitely often differentiable mappings and of $L^2$-mappings are discussed in the context of BMS representation theory). If the BMS should be a Lie group endowing the supertranslations with a natural topology,\footnote{The adjective ``natural'' for the topology and the Lie group structure arise here because one could trivially endow every set with the discrete topology. This turns the BMS group into a $0$-dimensional Lie group. Clearly this Lie group structure is neither natural nor interesting.} then the supertranslations must necessarily be smooth mappings. Indeed, in view of a prospective complexification of the BMS as a Lie group (see below in the statement of results), it appears to be desirable from a mathematical point of view to require supertranslations to be even real analytic. Note that similar remarks apply to the supertranslations for BMS like groups for other spacetime dimensions, \cite{Melas:2017jzb}. Finally, we discuss the validity of familiar identities, such as the Baker--Campbell--Hausdorff formula and Trotter product formula for the BMS group.

\subsection{Original results of the present article}

For the present article we investigate the BMS group and some of its generalizations as infinite-dimensional Lie groups. We establish that the $\BMS$ group is an infinite-dimensional Lie group modeled on a \Frechet~space. 
This result is not surprising, as it has been long known to physicists \cite{Sachs,McCar72a} that the BMS group is a semidirect product
\begin{equation}
\BMS = \mathcal{S} \rtimes \Lor
\end{equation}
of a function space $\mathcal{S} =C^\infty (\SSS^2)$ (viewed as an abelian Lie group) and the Lorentz group \(\Lor\), which is in particular a finite-dimensional Lie group. The novelty here is that we establish smoothness of the group action on the infinite-dimensional space of supertranslations $\mathcal{S}$. Since the supertranslations can not be modeled on a Banach space, one needs to replace the usual calculus as it makes no sense in our more general setting. A natural notion of smooth map is obtained here by requiring the existence and continuity of (all) iterated directional derivative. This yields a well behaved calculus of differentiable mappings, called Bastiani calculus (see \Cref{app:idim} for more information). 

To our knowledge an investigation of the Lie theoretic properties of the BMS group has not been attempted in the literature. 
Among our results is that the BMS group is regular (again this is not surprising in view of the semidirect product structure). As a consequence, we obtain the validity of the Trotter product formula and the commutator formula on the BMS group. 
These formulae are important tools in the representation theory of Lie groups. In the finite-dimensional setting they can be deduced from the Baker--Campbell--Hausdorff (BCH) formula. This leads us to more surprising results of our investigation: The BMS group is \textbf{not} an analytic Lie group. In a nutshell, the reason for this is that the group product incorporates function evaluations of smooth (but not necessarily analytic) mappings. We refer to \Cref{prop:BMS:notanalytic} below for more details. In particular, this entails that
\begin{itemize}
 \item the well known Baker--Campbell--Hausdorff series does not provide a local model for the Lie group multiplication, and
 \item there can not be a complexification of the BMS group which continues the real BMS group multiplication as a complex (infinite-dimensional) Lie group. (See also \cite{McCar92} on complexifications of the BMS group.)
\end{itemize}
Moreover, we remark that due to this defect either the Lie group exponential does not provide a local diffeomorphism onto an identity neighborhood or the BCH-series can not converge on any neighborhood of $0$ in the Lie algebra. 

Finally, we discuss the Lie group structure of an extension of the BMS group, called the generalized BMS group (or gBMS). It can be identified with the semidirect product $\mathcal{S} \rtimes \Diff (\SSS^2)$ and behaves quite similarly (when it comes to the Lie theory) to the BMS group. 

\subsection{An outlook to the Newman--Unti group} \label{ssec:outlook_NU}

Several extensions and generalizations of the BMS group have been proposed (see e.g.\ \cite{Ruzz20} for an overview). Beyond the generalized BMS group there is the extended BMS group proposed by Barnich and collaborators, \cite{Barnich:2009se} (we will not discuss this extension in the present paper). Another possibility to enlarge the BMS group is to replace the supertranslations in the semidirect product by a certain subset $\mathcal{N}$ of $C^\infty (\R \times \SSS^2)$. The Newman--Unti group is then the semidirect product (with respect to a suitable action)
\begin{equation}
\operatorname{NU} = \mathcal{N} \rtimes \Lor \, .
\end{equation}
Note that contrary to the BMS case, the function space $C^\infty (\R \times \SSS^2)$ consists now of functions on a non-compact manifold. This looks like a minor complication but indeed it changes the whole theory: For non-compact manifolds \(M\), the algebra of smooth functions $C^\infty (M)$ admits at least three qualitatively different choices of function space topologies (the compact open $C^\infty$-topology, the Whitney topologies and the fine very strong topology). Only two of these topologies lead to either a topological vector space (compact open topology) or to a manifold structure (fine very strong topology) on the function space. This complication does not appear for the supertranslations $\mathcal{S} = C^\infty (\SSS^2)$ as the compactness of the sphere $\SSS^2$ implies that all these topologies coincide. To avoid lengthy discussions of function space topologies only to discuss the Newman--Unti group, we refrain from this analysis in the current paper. 

Instead, the topology and Lie group structure of the Newman--Unti group will be discussed in future work \cite{Prinz_Schmeding_2}. There we will show that with respect to 
\begin{itemize}
 \item the compact open $C^\infty$ topology the NU group becomes a topological group (but apparently not a Lie group as it lacks a manifold structure), 
 \item the fine very strong topology the unit component of the NU group becomes an infinite-dimensional Lie group, while the full NU group does not become a Lie group.
\end{itemize}

\section{Asymptotically flat spacetimes}

We start this article with an introduction to different definitions of asymptotic flatness. As already mentioned in the introduction, we start opposed to the historical development with the construction due to Penrose and discuss the pioneering works of Bondi et al.\ and Sachs afterwards. The reason therefore is that the conformal extension due to Penrose is more fundamental, as it directly describes the spacetime-geometry, whereas the construction of Bondi et al.\ and Sachs is more restrictive, by using specific coordinate representations. We refer to \cite{Ashtekar,Friedrich:2017cjg} for excellent overview articles.

\subsection{Penrose's conformal extension} \label{ssec:Penrose_conformal_extention}

A very geometrical approach to this subject is given via Penrose's conformal extension \cite{Penrose_1,Penrose_2,Penrose_3,Penrose_4}. Basically the idea is that, given a spacetime \((M,g)\), we call it asymptotically simple if there exists a conformal embedding into a so-called `extended spacetime' \((\widehat{M},\hat{g})\), which is a manifold with boundary that represents the points `at infinity'. To this end, we start with the definition of a spacetime:

\begin{defn}[Spacetime] \label{def:spacetime}
	Let \((M,g)\) be a Lorentzian manifold. We call \((M,g)\) a spacetime, if it is smooth, connected, 4-dimensional and time-orientable.\footnote{We fix the spacetime-dimension solely for a cleaner presentation. Our main results are actually valid in arbitrary spacetime-dimensions, i.e.\ for any compact manifold $K$ and conformal group $\text{Conf} (K)$.}
\end{defn}

With the following construction, we want to unify spacetimes that approach Minkowski spacetime `at infinity'. For this, we construct the boundary
\begin{equation}
\scri := i_+ \sqcup \scri_+ \sqcup i_0 \sqcup \scri_- \sqcup i_- \, ,
\end{equation}
where \(i_+\) and \(i_-\) correspond to timelike infinity, \(\scri_+\) and \(\scri_-\) to lightlike infinity and finally \(i_0\) to spacelike infinity. This is then set as the boundary of a so-called `extended spacetime' (or sometimes also `unphysical spacetime') \((\widehat{M},\hat{g})\), and then related to the spacetime \((M,g)\) via a conformal embedding \(\iota \colon M \hookrightarrow \widehat{M}\), as follows:

\begin{defn}[Asymptotically simple (and empty) spacetime] \label{defn:Penrose_ase_spacetime}
	Let \((M,g)\) be an oriented and causal spacetime. We call \((M,g)\) an asymptotically simple spacetime, if it admits a conformal extension \(\big ( \conext,\conmet \big )\) in the sense of Penrose: That is, if there exists an embedding \(\iota \colon M \hookrightarrow \conext\) and a smooth function \(\varsigma \in C^\infty \big ( \conext \big )\), such that:
	\begin{enumerate}
		\item \(\conext\) is a manifold with interior \(\iota \left ( M \right )\) and boundary \(\scri\), i.e.\ \(\conext \cong \iota \left ( M \right ) \sqcup \scri\)
		\item \(\eval[2]{\varsigma}_{\iota \left ( M \right )} > 0\), \(\eval[2]{\varsigma}_{\scri} \equiv 0\) and \(\eval[2]{\dif \varsigma}_{\scri} \not \equiv 0\); additionally \(\iota_* g \equiv \varsigma^2 \conmet\)
		\item Each null geodesic of \(\big ( \conext,\conmet \big )\) has two distinct endpoints on \(\scri\)
	\end{enumerate}
	We call \((M,g)\) an asymptotically simple and empty spacetime, if additionally:\footnote{This condition can be modified to allow electromagnetic radiation near \(\scri\). We remark that asymptotically simple and empty spacetimes are also called asymptotically flat.}
	\begin{enumerate}
		\setcounter{enumi}{3}
		\item \(\eval[2]{\left ( R_{\mu \nu} \right )}_{\iota^{-1} ( \widehat{O} )} \equiv 0\), where \(\widehat{O} \subset \conext\) is an open neighborhood of \(\scri \subset \conext\)
	\end{enumerate}
\end{defn}

Using this construction, the BMS and Newman--Unti groups can be seen as diffeomorphisms acting on \(\scri\), cf.\ \cite{Schmidt_Walker_Sommers}. Furthermore, the geometry of asymptotically simple and empty spacetimes can be characterized as follows:

\begin{prop} \label{prop:ase-parallelizable}
	Let \((M,g)\) be an asymptotically simple and empty spacetime. Then \((M,g)\) is globally hyperbolic and thus parallelizable.
\end{prop}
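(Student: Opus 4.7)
The claim has two clearly separated parts, and I would treat them in turn, with the split providing a natural roadmap.

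\textbf{Part 1: global hyperbolicity.} My plan is not to reprove this from scratch but to invoke the classical result of Penrose (see e.g.\ Hawking--Ellis, \emph{The Large Scale Structure of Space-Time}, Prop.\ 6.9.2): an asymptotically simple and empty spacetime in the sense of \Cref{defn:Penrose_ase_spacetime} is globally hyperbolic. The key ingredient behind that theorem is condition (3) of the definition, which forces every inextendible null geodesic to reach past and future null infinity at distinct endpoints; together with the Ricci-flatness near $\scri$ (condition (4)) this rules out trapped surfaces in the unphysical spacetime and, via a conformal argument, upgrades causality to strong causality and finally to global hyperbolicity. I would simply recall this chain, give the precise citation, and state that a Cauchy hypersurface $\Sigma \subset M$ exists.

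\textbf{Part 2: parallelizability.} Here I would combine two standard results. First, by the smooth splitting theorem of Geroch, sharpened by Bernal--S\'anchez, any globally hyperbolic spacetime is diffeomorphic to $\R \times \Sigma$ with $\Sigma$ a smooth spacelike Cauchy hypersurface. Since $(M,g)$ is oriented and time-orientable (by \Cref{def:spacetime} and the hypothesis of \Cref{defn:Penrose_ase_spacetime}), the Cauchy slice $\Sigma$ inherits an orientation. Second, by a classical theorem of Stiefel, every orientable $3$-manifold has trivial tangent bundle, so $T\Sigma$ is parallelizable. Then
\begin{equation*}
TM \;\cong\; T(\R \times \Sigma) \;\cong\; \R \times (\R \oplus T\Sigma),
\end{equation*}
so a global frame on $\Sigma$ together with $\partial_t$ produces a global frame on $M$.

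\textbf{Where I expect the real obstacle to be.} The parallelizability step is essentially bookkeeping once the splitting is in hand. The substantive point is global hyperbolicity, and the only non-trivial work is checking that the hypotheses of the Penrose/Hawking--Ellis statement are precisely what \Cref{defn:Penrose_ase_spacetime} assumes, in particular that the ``two distinct endpoints'' condition in (3) is strong enough (together with orientability and causality) to exclude the usual counterexamples to global hyperbolicity. Provided this matches, the proposition reduces to two citations plus the one-line framing argument above; I would keep the write-up correspondingly short.
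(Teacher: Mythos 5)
Your proposal matches the paper's proof: both parts cite Hawking--Ellis Proposition 6.9.2 for global hyperbolicity and then use the splitting of a globally hyperbolic four-dimensional spacetime as $\R \times \Sigma$ with $\Sigma$ an orientable $3$-manifold, hence parallelizable, to obtain a global frame on $M$. Your write-up merely spells out the Geroch/Bernal--S\'anchez splitting and the Stiefel theorem explicitly where the paper leaves them implicit; the argument is the same.
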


\begin{proof}
	The first part of the statement, i.e.\ that \((M,g)\) is globally hyperbolic, is a classical result due to Ellis and Hawking \cite[Proposition 6.9.2]{Hawking_Ellis}. We conclude the second part, i.e.\ that \((M,g)\) is parallelizable, by noting that we have additionally assumed spacetimes to be four-dimensional: Thus, being globally hyperbolic, the space-submanifold is three-dimensional and thus parallelizable, as it is orientable by assumption.
\end{proof}

Moreover, in the case of a vanishing cosmological constant \(\Lambda = 0\), the topology of \(\scri\) can be characterized as follows:

\begin{prop} \label{prop:topology_scri}
	Let \((M,g)\) be an asymptotically simple spacetime with vanishing cosmological constant \(\Lambda = 0\). Then the two components representing lightlike infinity \(\scri_\pm \supset \scri\) are both homeomorphic to \(\mathbb{R} \times \mathbb{S}^2\).
\end{prop}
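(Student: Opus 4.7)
The plan has two main steps. First, I would establish that $\scri$ is a null hypersurface of the conformal extension $\widehat{M}$. Second, I would identify the topology of each connected component using condition (3) of Definition 2.4 (two-ended null geodesic completeness).

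For the first step I would exploit the standard conformal transformation law for the Ricci tensor under $g = \varsigma^{-2}\hat{g}$. Multiplying that identity by $\varsigma^2$ and evaluating on the boundary locus $\{\varsigma = 0\}$, the regular terms in $\hat{R}_{\mu\nu}$ and in the second derivatives of $\varsigma$ drop out; inserting the Einstein field equations $R_{\mu\nu} - \tfrac{1}{2}R g_{\mu\nu} = \kappa T_{\mu\nu}$ with $\Lambda = 0$, together with the decay of $T_{\mu\nu}$ towards $\scri$ (which follows from the asymptotic emptiness hypothesis of Definition 2.4), the remaining terms combine after taking a trace to
\begin{equation*}
\left. \hat{g}^{\alpha\beta}(\partial_\alpha\varsigma)(\partial_\beta\varsigma) \right|_{\scri} = 0 .
\end{equation*}
Combined with $\dif\varsigma|_{\scri}\not\equiv 0$ from condition (2), this forces $\dif\varsigma$ to be a nonzero null covector along $\scri$. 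Hence $\scri$ is a smooth null hypersurface of $\widehat{M}$ and each component $\scri_\pm$ is ruled by its null generators.

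For the second step I would use that the null generators of a null hypersurface are null geodesics (up to reparametrization). By condition (3), every null geodesic of $\widehat{M}$ has two distinct endpoints on $\scri$, which by time-orientability lie one in $\scri_+$ and one in $\scri_-$. An affine parameter along a single generator of $\scri_+$ supplies the $\mathbb{R}$-factor. For the $\mathbb{S}^2$-factor I would identify the space of generators of $\scri_+$ with the celestial sphere: fix any point $p \in M$, identify $\mathbb{S}^2$ with the set of future-pointing null directions at $p$, and send each such direction to the future endpoint in $\scri_+$ of the corresponding null geodesic. By condition (3) this map is well-defined and surjective onto the space of generators, and a continuity/properness argument then upgrades the resulting bijection to a homeomorphism $\scri_+ \cong \mathbb{R} \times \mathbb{S}^2$. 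The case of $\scri_-$ is entirely analogous.

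The hard part is the second step: one has to control the global behaviour of the null-generator fibration carefully, ruling out pathological accumulation of geodesics on the boundary and ensuring that the resulting $\mathbb{S}^2$-bundle over $\mathbb{R}$ is trivial — which is forced by the orientability and connectedness inherited from $M$. Rather than reproving these classical facts in detail, I would appeal to the treatments in \cite{Hawking_Ellis} and the survey \cite{Friedrich:2017cjg} already cited in the paper.
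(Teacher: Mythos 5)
The paper does not actually prove this statement: its ``proof'' is a citation of a deep classical theorem, \cite[Corollary 2]{Newman}. Your proposal instead sketches the classical Penrose-style argument, so the comparison is between your sketch and the theorem the paper invokes. Step~1 of your sketch (that $\dd\varsigma$ is null along $\scri$, hence $\scri$ is a null hypersurface ruled by its generators) is sound, with the caveat that it genuinely needs the Ricci-decay condition of asymptotic \emph{emptiness} (item~4 of Definition~\ref{defn:Penrose_ase_spacetime}), which is not literally among the hypotheses of the proposition as stated; you are right to flag that you are using it, and the statement should really read ``asymptotically simple and empty''. (A minor point: you work with $g=\varsigma^{-2}\hat g$, which is the standard convention, whereas Definition~\ref{defn:Penrose_ase_spacetime}(2) literally writes $\iota_*g=\varsigma^2\conmet$; the computation in Step~1 is carried out in your convention.)

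The genuine gap is in Step~2, and it sits exactly where you appeal to ``a continuity/properness argument''. The map from the celestial sphere of future-pointing null directions at a fixed $p\in M$ to the space of generators of $\scri_+$ is continuous, but neither its injectivity nor its surjectivity is automatic: distinct null geodesics from $p$ can develop conjugate points and terminate on the same generator, some generators could a priori be missed, and the quotient space of generators need not even be Hausdorff without further global input. Ruling all of this out is not a routine properness argument; it requires the full global causal analysis of asymptotically simple and empty spacetimes (in particular global hyperbolicity with Cauchy surfaces homeomorphic to $\R^3$), and that is precisely the content of Newman's theorem --- Penrose's original argument along the very lines you propose was observed by Geroch to be incomplete at exactly this point, which is why the rigorous proof only appeared in \cite{Newman}. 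Deferring the ``hard part'' to \cite{Hawking_Ellis} and \cite{Friedrich:2017cjg} does not close the gap, since those treatments either reproduce the same heuristic or themselves cite Newman. So your outline is the correct classical picture, but as a proof it ultimately reduces to the citation the paper already makes; if you want an honest self-contained argument, the missing ingredient is the global structure theorem for simple spacetimes.
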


\begin{proof}
	This is a classical result due to Newman \cite[Corollary 2]{Newman}.
\end{proof}

Finally, we also mention the viewpoint from symplectic geometry to the above concepts via the Hamiltonian formalism of General Relativity \cite{Ashtekar_Streubel}.

\subsection{Bondi et al.'s coordinate-wise definition} \label{ssec:BMS_coortinate_definition}

The original approach to this subject was due to the pioneering works of Bondi, van der Burg and Metzner \cite{Bondi_VdBurg_Metzner} and Sachs \cite{Sachs} via the following coordinate-wise definition. As it turned out later using Penrose's conformal extension, asymptotically flat spacetimes are parallelizable (\propref{prop:ase-parallelizable}). Thus, the following coordinate functions can be defined globally, modulo possible singularities. Furthermore, we remark that they are constructed for spherically symmetric situations, which motivates the use of spherical coordinates for the spatial coordinates:

\begin{defn}[BMS coordinate functions] \label{defn:bms_coordinates}
	Let \((M,g)\) be an asymptotically simple spacetime with globally defined coordinate functions \(x^\alpha \colon M \to \mathbb{R}^4\), denoted via \(x^\alpha := (t,x,y,z)\). Then we introduce the so-called BMS coordinate functions \(y^\alpha \colon M \to \mathbb{R} \times [0, \infty) \times \mathbb{S}^2\), denoted via \(y^\alpha = (u,r,\vartheta,\varphi)\), as follows: We transform the spatial coordinates \((x,y,z)\) to spherical coordinates \((r,\vartheta,\varphi)\) via the relations
	\begin{subequations}
	\begin{align}
	r & := \sqrt{x^2 + y^2 + z^2} \, , \\
	\vartheta & := \arccos \left ( \frac{z}{r} \right )\\
	\text{and} \quad \varphi & := \arctan \left ( \frac{y}{x} \right ) \, ,
	\intertext{and then combine the timelike coordinate \(t\) together with the radial coordinate \(r\) to the lightlike coordinate}
	u & := t - r \, .
	\end{align}
	\end{subequations}
	Given these coordinates and combining the angles \(z^a := (\vartheta,\varphi)\), the metric reads as follows:
	\begin{subequations}
	\begin{align}
	g_{\mu \nu} \dif x^\mu \otimes \dif x^\nu & \equiv - \frac{V}{r} e^{2 \beta} \dif u \otimes \dif u - e^{2 \beta} \left ( \dif u \otimes \dif r + \dif r \otimes \dif u \right ) \\ & \phantom{\equiv} + r^2 h_{a b} ( \dif z^a - U^a \dif u ) \otimes ( \dif z^b - U^b \dif u ) \, ,
	\intertext{where \(h_{a b}\) is the metric on the (deformed) unit sphere, i.e.}
	\begin{split}
	h_{a b} \dif z^a \otimes \dif z^b & \equiv \cosh (2 \delta) \left ( e^{2 \gamma} \dif \vartheta \otimes \dif \vartheta + e^{- 2 \gamma} \sin^2 (\vartheta) \dif \varphi \otimes \dif \varphi \right ) \\ & \phantom{\equiv} + \sin (\vartheta) \sinh(2 \delta) \left ( \dif \vartheta \otimes \dif \varphi + \dif \varphi \otimes \dif \vartheta \right ) \, .
	\end{split}
	\end{align}
	\end{subequations}
	Here, we have expressed the metric degrees of freedom via real functions on the spacetime \(V, \beta, \gamma, \delta \in C^\infty ( M, \mathbb{R} )\) and a vector field on the unit sphere \(U \in \mathfrak{X} ( S^2 )\).
\end{defn}

\begin{rem} \label{rem:diffeomorphism_gauge_fixing}
	BMS coordinate functions are an optimal parametrization of the metric, i.e.\ every degree of a general Lorentzian globally hyperbolic metric is parametrized via a function, cf.\ \cite{Maedler_Winicour}. In that, they can be seen as a diffeomorphism gauge fixing.
\end{rem}

\begin{defn}[BMS asymptotic flatness] \label{defn:bms_asymptotic_flatness}
	Given the coordinate functions from \defnref{defn:bms_coordinates}, the spacetime \((M,g)\) is called asymptotically flat, if for all \((u, r, \vartheta, \varphi) \ni y^\alpha\) fixed, we have the following radial asymptotic behavior:
	\begin{subequations} \label{eqns:fall-off_properties}
	\begin{align}
	 \lim_{r \to \infty} V / r & = 1 + \mathcal{O} ( 1/r ) \, , \\
	 \lim_{r \to \infty} h_{a b} & = q_{a b} + \mathcal{O} ( 1/r ) \quad \text{with} \quad q_{a b} = \operatorname{diag} \left ( 1, \sin^2 (\vartheta) \right ) \, , \\
	 \lim_{r \to \infty} \beta & = \mathcal{O} ( 1/r^3 )\\
	\text{and} \quad \lim_{r \to \infty} U^a & = \mathcal{O} ( 1/r^3 ) \, .
	\end{align}
	\end{subequations}
\end{defn}

\begin{rem}
	Comparing \defnref{defn:Penrose_ase_spacetime} to \defnref{defn:bms_asymptotic_flatness}, we remark that the fall-off properties from Equations~(\ref{eqns:fall-off_properties}) translate into a smoothness-condition at \(\scri\). Additionally, the BMS framework provides a diffeomorphism gauge fixing, as was discussed in \remref{rem:diffeomorphism_gauge_fixing}. Thus, the corresponding asymptotic symmetry group, which will be called BMS group, needs to respect this coordinate choice.
\end{rem}

\section{Asymptotic symmetry groups}

Asymptotic symmetry groups are subgroups of the diffeomorphism group that preserve the chosen boundary condition and gauge fixing. We focus in this article on the BMS group and their generalizations that preserve the conditions given in Subsection~\ref{ssec:BMS_coortinate_definition}. Our aim is to establish (infinite-dimensional) Lie group structures on the asymptotic symmetry groups. For readers who are not familiar with calculus beyond Banach spaces, we have compiled the basic definitions in \Cref{app:idim} (and we suggest to review them before continuing).

\subsection{General constructions}

Before we begin, let us recall two general constructions which will be used throughout the following sections:

\begin{setup}\label{prelim:ck-top}
 We will encounter spaces of differentiable mappings as infinite-dimensional manifolds. Let us repeat some important definitions and properties of these manifolds.
 For $M,N$ (smooth) manifolds, we denote by $C^k(M,N)$ the space of $k$-times continuously differentiable mappings from $M$ to $N$, where $k \in \N_0 \cup \{\infty\}$. 
 If nothing else is said, we topologize $C^k (M,N)$ with the compact open $C^k$-topology. This is the topology turning the mapping
 $$C^k (M,N) \rightarrow \prod_{\ell \in \N_0 , \ell \leq k} C (T^\ell M, T^\ell N) ,\quad f \mapsto (T^\ell f)_{0 \leq \ell \leq k}$$
 into a topological embedding. Here the spaces on the right hand side are topologized with the compact open topology and $T^\ell$ denotes the $\ell$-fold iterated tangent functor.
 Recall from \cite[Appendix A]{AaGaS20} that this topology turns $C^k (M,N)$ into a Banach (resp.~\Frechet) manifold for $k \in \N_0$ (resp.~$k=\infty$) if $M$ is compact and $N$ is a finite dimensional manifold. 
 Moreover, one can prove that if $N$ is a locally convex vector space, then also $C^k(M,N)$ is a locally convex vector space with the pointwise operations.
 
 Now we finally recall from \cite[Lemma A.10]{AaGaS20} that for $M$ compact and $N$ a finite dimensional manifold the manifold $C^k(M,N)$ is canonical, i.e.~a mapping 
 $$h \colon A \rightarrow C^k (M,N) \text{ for any smooth manifold } A$$
 is of class $C^\ell$ if and only if the adjoint map $h^\wedge \colon A \times M \rightarrow N, (a,m)\mapsto h(a)(m)$ is a $C^{\ell,k}$-map.
 This means that $h^\wedge$ is $\ell$-times continuously differentiable with respect to the $A$-component of the product and each of these differentials is then $k$-times differentiable with respect to the $M$-component (cf.\ \Cref{defn:CRS}).
\end{setup}

Finally, we recall the concept of a semidirect product (all asymptotic symmetry groups discussed in this article will turn out to be semidirect products).
Let us first recall from \cite[Lemma 2.2.3]{HaN12} the notion of a semidirect product 

\begin{setup}[Semidirect product of groups]
 Let $N$ and $H$ be groups and $\Aut(N)$ the group of automorphisms of $N$. Assume that $\delta \colon H \rightarrow \Aut (N)$ is a group homomorphism. Then we define a multiplication on $N \times H$ by
 \begin{align}\label{semidirect}
  (n,h) (m,g) := (n\delta(h)(m), hg)
 \end{align}
 This multiplication turns $N \times H$ into a group denoted by $N \rtimes_\delta H$, where $N \cong N \times \{e\}$ is a normal subgroup and $H \cong \{e\}\times H$ is a subgroup. Furthermore, each element $x \in N \rtimes_\delta H$ has a unique representation as $x = nh, n \in N, h \in H$.
 
 If $H,N$ are Lie groups (or analytic Lie groups) and $\delta^\wedge \colon H \times N \rightarrow H, (h,n) \mapsto \delta (h)(n)$ is smooth (analytic)\footnote{If $N$ is finite-dimensional, it suffices to require that $\delta$ is a Lie group morphism. In general, there is no Lie group structure on $\Aut (N)$ which guarantees smoothness of the group operation in $N \rtimes_\delta H$.}, then $N \rtimes_\delta H$ is a Lie group (analytic Lie group). Its Lie algebra is a semi-direct product of Lie algebras.
\end{setup}

\subsection{The BMS group}\label{sect:BMS}

The BMS group can be identified with conformal automorphisms of the future infinity boundary $\scri^+$ of the extended spacetime $\hat{M}$. Identifying the boundary as $\scri^+ = \SSS^2 \times \R$, we can think of $\SSS^2$ as the Riemann sphere with standard spherical coordinates $(\theta,\phi) \in [0,\pi[ \times [0,2\pi[$.
Recall from \cite[Appendix B.2 and B.3]{Oli02} that we can use the stereographic projections to construct charts for the sphere $\SSS^2$ as a complex $1$-dimensional manifold:
$$\text{st}\colon \SSS^2 \setminus \{(0,0,1)\} \rightarrow \bC \cong \R^2,\quad \text{st}(\theta,\phi) := \zeta := e^{i\phi} \cot \left( \frac{\theta}{2}\right).$$
Extending the stereographic projection to a diffeomorphism $\kappa \colon \SSS^2 \rightarrow \bC \cup \{\infty\}$ via $\kappa (\z) = \begin{cases}
 \text{st}(\z)& \z \in \SSS^2 \setminus \{(1,0,0)\}\\
\infty & \text{else}
\end{cases}$
of $\SSS^2$ with the extended complex plane $\eC := \bC \cup \{\infty\}$ (aka the Riemann sphere). Traditionally, the resulting complex coordinates are denoted by $\mathbf{z} = (\zeta ,\overline{\zeta})$ (where the second is the complex conjugate of the first). However, since in this section we will work globally and (mostly) suppress explicit coordinate expressions, we will continue to write $\z$ for the variable from $\SSS^2$. Let us now collect some well known facts about the complex structure of the sphere and its interplay with the conformal group:

\begin{setup}\label{smooth:complex}
 Using the isomorphism with the extended complex plane, the orientation preserving conformal automorphisms of $\SSS^2$ (with the standard conformal structure) can be identified as the group of M\"{o}bius transformations of determinant $1$ of the Riemann sphere:
\begin{equation}\label{Mobius}
\mathcal{M} := \left\{\bC \cup \{\infty\}\ni \zeta \mapsto \frac{a\zeta +b}{c\zeta +d} \middle| a,b,c,d \in \bC , ad-bc=1\right\}.
\end{equation}
where (as usual) for $c\neq 0$ the definition of $f \in \mathcal{M}$ is extended to the Riemann sphere by $f(\infty) = a/c$ and $f(-d/c)=\infty$). Note that two elements in $\mathcal{M}$ are equal if their actions on the Riemann sphere coincide. Thus $\mathcal{M} \cong \text{SL} (2,\bC) / \{\pm I\} \equalscolon \PSL (2,\bC)$ as Lie groups. In particular, $\text{SL} (2,\bC)$ is the universal cover of $\mathcal{M}$ (which is locally diffeomorphic to $\mathcal{M}$). Hence we can conveniently construct smooth or analytic maps to $\mathcal{M}$ by composing such maps with values in $\text{SL}(2,\bC)$ with the canonical quotient map. See \cite[Appendix B]{Oli02} for a thorough discussion of M\"{o}bius transformations and the resulting M\"{o}bius group. 

It is essential to recall that $\mathcal{M}$ is a complex Lie group (cf.~\cite[p.113]{Kna02} and \cite[B.4]{Oli02}) which acts by biholomorphic transformations on the Riemann sphere. In particular, the resulting action is a complex Lie group action 
$$\gamma_\bC\colon \mathcal{M} \times \eC \rightarrow \eC , (\Lambda,\zeta) \mapsto \frac{a\zeta+b}{c\zeta + d} , \quad\text{for } \Lambda = \begin{bmatrix} a & b \\ c & d \end{bmatrix},
$$
i.e.~it is holomorphic. One can prove \cite[2.3]{Schott08} that every conformal orientation preserving isomorphism of $\SSS^2$ corresponds (up to composition with the stereographic projections) to a M\"{o}bius transformation. Under this identification we obtain a holomorphic mapping which computes the conformal factor (cf.~\cite[p.220]{Oli02}) at every point of the Riemann sphere:  
$$K_\bC \colon \mathcal{M} \times \eC \rightarrow ]0,\infty[ ,\quad K(\Lambda , \zeta) := % \frac{1+\zeta\overline{\zeta}}{(a\zeta+b)(\overline{a}\overline{\zeta} + \overline{b})+(c\zeta+d)(\overline{c}\overline{\zeta} + \overline{d})}
\frac{1+\lVert \zeta\rVert^2}{\lVert a\zeta + b\rVert^2+\lVert c\zeta + d\rVert^2} \text{ for } \Lambda = \begin{bmatrix} a & b \\ c & d\end{bmatrix}.$$
 Observe that $K_{\bC}$ only takes values in $]0,\infty[$ and we can naturally extend the definition of the conformal factor to the point $\zeta = \infty$ by setting $K_\bC (\Lambda,\infty) = 1 /(|a|^2+|c|^2)$.
\end{setup}

Usually, we will wish to consider the BMS group as a real Lie group. Thus we have to forget the complex structure on the conformal group. Indeed we would like to interpret a M\"{o}bius transformation as an orientation preserving conformal automorphism of $\SSS^2$.

\begin{setup}
Forgetting the complex structure on $\mathcal{M}$ we can treat it as a real analytic Lie group, cf.~e.g.~\cite[I.13]{Kna02}. This yields the natural real Lie group structure for the conformal group of the sphere, which is isomorphic (as a Lie group) to the proper orthochronous Lorentz group $\Lor$.\footnote{We suppress the group isomorphism and think of elements in $\Lor$ as conformal transformations on $\SSS^2$.}
To stress that we are working with conformal mappings we shall define for every $f \in \Lor$ the associated M\"{o}bius transformation $\kappa \circ f \circ \kappa^{-1}$ and denote by $\Lambda_f \in \mathcal{M}$ its matrix representation. 
Note that the composition with the diffeomorphism $\kappa$ induces the isomorphism between $\Lor$ and $\mathcal{M}$ which we are implicitly exploiting here.
\end{setup}

\begin{setup}\label{setup:realana}
Now $\Lor \cong \mathcal{M}$ is a real analytic Lie group. The action of $\text{SL}(2,\bC)$ is analytic, whence we obtain a real analytic (and in particular smooth) action on $\SSS^2$ via
$$\gamma \colon \Lor \times \SSS^2 \rightarrow \SSS^2, \quad (f, \z ) \mapsto f(\z).$$
Moreover, the map associating the conformal factor to an element of $\text{SL}(2,\bC)$ is smooth (even real analytic) and descents to a smooth (real analytic) map 
$$K \colon \Lor \times \SSS^2 \rightarrow ]0,\infty[ , (f,\z) \mapsto K_f (\z) := K_\bC (\Lambda_f, \kappa(\zeta))$$
Exploiting now the canonical manifold structure on the $C^k$-functions, \Cref{prelim:ck-top} we obtain a smooth map  
\begin{align*}
 \Psi^k \colon \Lor &\rightarrow C^k (\SSS^2,\SSS^2),\quad f \mapsto \left(\z \mapsto f(\z)\right), \forall k \in \N_0 \cup\{\infty\}, \\
 K^\vee \colon \Lor & \rightarrow C^\infty (\SSS^2,]0,\infty[),\quad f \mapsto (\z \mapsto K(f,\z)).
\end{align*}
Note that due to $\gamma$ being a group action, the map $\Psi := \Psi^\infty$ actually takes its image in $\Diff (\SSS^2)$ (which is an open subset of $C^\infty (\SSS^2,\SSS^2)$). 

Similarly we can invert the conformal factor and obtain a smooth map 
\begin{align*}
 \delta \colon \Lor  \rightarrow& C^\infty (\SSS^2 , \R),\quad f \mapsto K_f(\cdot)^{-1}.
\end{align*}
\end{setup}

\begin{setup}[A family of group actions]
 For each $k \in \N \cup \{\infty\}$ we define a right group action (cf.\ \cite[Proposition 6.4]{Alessio:2017lps}) via
 \begin{align}\label{top:gpact}
\sigma_k \colon C^k (\SSS^2,\R) \times \Lor \rightarrow C^k (\SSS^2,\R),\quad  \sigma_k (\alpha,f) := K_f(\cdot)^{-1} \cdot \alpha \circ f.  
 \end{align}
Here the spaces $C^k(\SSS^2,\R)$ are endowed with the compact open $C^k$-topology, \Cref{prelim:ck-top}.
To see that $\sigma_\infty$ is smooth, we rewrite it as follows:
$$\sigma_\infty (\alpha,f) = m_* (\delta (f) , \text{Comp} (\alpha , \Psi(f))),$$
where $\text{Comp} \colon C^\infty (\SSS^2,\SSS^2) \times C^\infty (\SSS^2 ,\R) \rightarrow C^\infty (\SSS^2 ,\R) , (f,\alpha) \mapsto \alpha \circ f$ is the composition of mappings. Composition is a smooth map by \cite[Theorem 11.4]{Michor80}. Now $m_*\colon C^\infty (\SSS^2 , \R)^2 \cong C^\infty (\SSS^2,\R \times \R) \rightarrow C^\infty (\SSS^2 , \R) , (\varphi ,\psi) \mapsto \varphi \cdot \psi$ is the pushforward with the multiplication in $\R$. Due to the compactness of $\SSS^2$ also the pushforward is smooth \cite[Corollary 10.14]{Michor80} and we conclude that $\sigma_\infty$ is smooth.

Note that for $k \in \N_0$ the action $\sigma_k$ is only continuous. Indeed as \cite[Theorem 11.4]{Michor80} shows, the derivative of the composition requires derivatives of the function composed from the left, whence composition can not be differentiable on spaces of finite differentiability.
\end{setup}

As these observations show, only the space of smooth functions supports a Lie group action \eqref{top:gpact} and we name this space according to the physics conventions.

\begin{defn}
 We define the \emph{space of (smooth) supertranslations} 
 $$\mathcal{S}:=C^\infty(\SSS^2):=C^\infty(\SSS^2,\R).$$
 Endowing it with the compact-open $C^\infty$-topology, we obtain a \Frechet -space which makes $\mathcal{S}$ into an abelian \Frechet~Lie group, called the \emph{group of supertranslations}. 
\end{defn}

It is well known that the BMS group can be described as a semidirect product of the orthochronous Lorentz group and the supertranslations, \cite{Sachs,McCar72a}. We follow the description in \cite[Proposition 6.4]{Alessio:2017lps} where the semidirect product structure is explicitly described:

\begin{setup}[The semidirect product of conformal maps with supertranslations]\label{setup:semidirect}
To ease notation we denote the action \eqref{top:gpact} of the conformal group of the unit sphere on the supertranslations $\mathcal{S}$ by
\begin{align}
\sigma \colon \mathcal{S} \times  \Lor \rightarrow  \mathcal{S} , \quad (\alpha,f) \mapsto (\z \mapsto \alpha(\z)\Lambda_f := K_f^{-1}(\z)\alpha(f(\z)) .
\end{align}
Observe that the action induces a group homomorphism
$$\sigma^\vee \colon \Lor \rightarrow \text{Aut}(C^\infty (\SSS^2)),\quad \phi \mapsto \sigma (\cdot, \phi )$$
and we can form the semidirect product $\mathcal{S} \rtimes_\sigma \Lor$. 

This group is a \Frechet~Lie group whose Lie algebra is $\mathcal{S} \rtimes_{\hat{\sigma}} \mathfrak{so} (1,3)$ where 
$$\hat{\sigma} \colon \mathfrak{so} (1,3) := \Lf (\Lor) \rightarrow \text{der} (C^\infty (\SSS^2))$$ is the derived representation to the representation of $\Lor $ on $\mathcal{S}$ given by $f . \alpha = T\sigma (f,\cdot) (\alpha)$ (cf.~\cite[Proposition 9.2.25]{HaN12}).

It has been shown in \cite{Alessio:2017lps} that the BMS group is the semidirect product of the supertranslations with the conformal group with respect to the action $\sigma$, in other words
 \begin{align} \label{eqn:BMS_semidirect_product}
  \BMS \cong \mathcal{S} \rtimes_{\sigma} \Lor
 \end{align}
 as a group. We deduce that the BMS group is an infinite-dimensional \Frechet~Lie group. Applying the definition of the multiplication for a semidirect product we obtain the following formulae for the product and the inverse in the BMS:
 \begin{align*}
  (F,\phi) \cdot (G,\psi) &= (F+ \sigma(G,\phi), \phi \circ \psi) =  (F+ K^{-1}_\phi \cdot G \circ \phi , \phi \circ \psi)\\
  (F,\psi)^{-1} &= (-\sigma (F,\psi^{-1}), \psi^{-1}) = (-K_\psi \cdot F\circ \psi^{-1}, \psi^{-1})
 \end{align*}

\end{setup}

\subsection{Properties of the BMS group}

We proceed by discussing the properties of the BMS group as an infinite-dimensional Lie group, as constructed in the previous subsection. We begin with a proof that the BMS-group is not a real analytic Lie group, which motivates our conjecture that the BMS group does not admit exponential coordinates.
For this it suffices to prove that the Lie group action
$$\sigma \colon \mathcal{S} \times \Lor \rightarrow \mathcal{S}$$
is not real analytic as it entails that the group multiplication can not be real analytic.
 It will turn out that $\sigma$ can not be analytic with respect to the variable from $\Lor$. To see this we will construct a suitable analytic curve into $\Lor$ which does not get mapped to an analytic curve under the action.

\begin{prop}\label{prop:BMS:notanalytic}
 The group action $\sigma \colon \mathcal{S} \times \Lor \rightarrow \mathcal{S}$ is not real analytic. Thus the BMS group is not real analytic.
\end{prop}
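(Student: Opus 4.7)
The plan is to assume $\sigma$ is real analytic and then derive a contradiction by producing a real analytic curve $c \colon (-\varepsilon,\varepsilon) \to \Lor$ together with a supertranslation $\alpha \in \mathcal{S}$ for which $t \mapsto \sigma(\alpha,c(t))$ fails to be a real analytic curve into $\mathcal{S}$. The key reduction is that each point evaluation $\ev_{\z} \colon \mathcal{S} \to \R$, $\beta \mapsto \beta(\z)$, is continuous and linear, hence real analytic. Consequently, if $t \mapsto \sigma(\alpha,c(t))$ were real analytic, so would every scalar curve $t \mapsto \ev_{\z_0}\, \sigma(\alpha,c(t)) = K_{c(t)}^{-1}(\z_0)\cdot \alpha(c(t)(\z_0))$ for $\z_0 \in \SSS^2$. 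It therefore suffices to exhibit a single $\z_0$ for which this expression is not analytic in $t$ near $0$.

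For the concrete construction I choose $\alpha \in \mathcal{S}$ smooth but not real analytic: take $\alpha$ to vanish identically on an open geodesic disc $U \subset \SSS^2$ while being strictly positive on an open set meeting $\partial U$, e.g.\ by composing the standard $e^{-1/s^2}$-type bump with a boundary-defining function for $U$. Fix a point $\z_0 \in \partial U$ and choose $X \in \mathfrak{so}(1,3) \cong \Lf(\Lor)$ such that the one-parameter subgroup of rotations $c(t) := \exp(tX)$ pushes $\z_0$ transversally across $\partial U$, namely $c(t)(\z_0) \in U$ for $t \in (0,\varepsilon)$ and $c(t)(\z_0) \in \{\alpha > 0\}$ for $t \in (-\varepsilon,0)$. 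Such an $X$ exists because $\Lor$ acts transitively on $\SSS^2$ and any tangent direction at $\z_0$ can be realised as the initial velocity of a one-parameter subgroup of rotations.

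By \Cref{setup:realana} the conformal factor $K$ is real analytic as a map $\Lor \times \SSS^2 \to (0,\infty)$, so $t \mapsto K_{c(t)}^{-1}(\z_0)$ is real analytic and strictly positive. If $t \mapsto K_{c(t)}^{-1}(\z_0) \cdot \alpha(c(t)(\z_0))$ were real analytic on a neighbourhood of $0$, then dividing by this nowhere-vanishing analytic factor would force $t \mapsto \alpha(c(t)(\z_0))$ to be real analytic as well. By construction, however, this scalar function vanishes identically on $(0,\varepsilon)$ while being strictly positive at some point of $(-\varepsilon,0)$, contradicting the identity theorem for real analytic functions on a connected interval. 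Hence $\sigma$ cannot be real analytic. Since the BMS multiplication from \Cref{setup:semidirect} is $(F,\phi)\cdot(G,\psi) = (F + \sigma(G,\phi), \phi\circ\psi)$ and thus contains $\sigma$ as its first coordinate, non-analyticity of $\sigma$ implies non-analyticity of the group product, proving that $\BMS$ is not a real analytic Lie group.

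The delicate point I expect is the reduction via point evaluations: in the Bastiani / convenient framework one must verify that real analyticity of a curve $(-\varepsilon,\varepsilon) \to \mathcal{S} = C^\infty(\SSS^2)$ is preserved under composition with the continuous linear evaluation functionals $\ev_\z$. Granted this, everything else is elementary, and the moral of the proof is that $\sigma$ plugs the group variable as an argument into the merely smooth function $\alpha$, which no amount of analyticity of the conformal factor or of the group action $\gamma$ on $\SSS^2$ can repair.
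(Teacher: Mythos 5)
Your proof is correct and follows essentially the same strategy as the paper's: compose $\sigma$ with a real analytic curve into $\Lor$, reduce to a scalar curve via the (continuous linear, hence analytic) point evaluations, and contradict analyticity using a smooth but non-analytic supertranslation evaluated along the moving point. The only differences are in the choice of witnesses: the paper transports a nowhere-analytic function by the translation one-parameter subgroup in stereographic coordinates (engineered so the conformal factor cancels exactly), whereas you use a flat bump function, a rotation, and the identity theorem after dividing out the nowhere-vanishing analytic conformal factor --- both are valid instantiations of the same idea, and your flagged ``delicate point'' is exactly what the paper settles by citing \cite[Lemma 42.12\,(2)]{KaM97}.
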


\begin{proof}
 For the proof we will identify $\SSS^2$ with the Riemann sphere $\eC$ via the (analytic) diffeomorphism $\kappa$. 
 
 \textbf{Step 1:} \emph{A smooth map on $\eC$ which is nowhere analytic near $0$}. 
 Pick a smooth but nowhere analytic function $\theta \colon \R \rightarrow \R$. The concrete function is irrelevant here, we just remark that many such functions exist, see \cite{Dar73}. Further, we pick a smooth cutoff function $\chi \colon \eC \rightarrow [0,\infty[$ such that $\chi(\zeta) \equiv 0$ if $|\zeta|> 43$ and $\chi (\zeta) \equiv 1$ if $|\zeta| \leq 42$. With this we define the smooth function
 $$h \colon \eC \rightarrow \R,\quad \zeta \mapsto \frac{\chi (\zeta)}{1+\text{Re} (\zeta)^2}\theta (\chi (\zeta) \text{Re} (\zeta)),$$
 where $\text{Re} (\zeta)$ is the real part of the complex number $\zeta \in \bC$ (note that $\zeta =\infty$ is irrelevant as we are cutting off with $\chi$).\\[1em]
 %By construction, $h$ is nowhere on the ball of radius $42$ real analytic. Even more explicitly, pick the (real) analytic curve $c \colon \R \rightarrow \bC, t \mapsto t + 0i$. Identifying $\bC$ as a subset of $\eC$, we obtain a real analytic curve in $\eC$ passing through $0$ such that  $h\circ c (t) = \theta (t),\text{ for }|t|<42$ which is not real analytic by choice of $\theta$.
 
 \textbf{Step 2:} \emph{An analytic curve in $\mathcal{S} \times \Lor$}.
 We consider the analytic one-parameter group $\R \rightarrow \text{SL}(2,\bC), t \mapsto \exp (tA)= \begin{bmatrix} 1 & t \\ 0 & 1 \end{bmatrix}$, where $A = \begin{bmatrix} 0 & 1 \\ 0 & 0 \end{bmatrix}$ and $\exp$ is the matrix exponential. Composing with the canonical quotient map, we obtain a real analytic curve $\eta_2 \colon \R \rightarrow \mathcal{M} , t \mapsto \begin{bmatrix} 1 &  t \\ 0 & 1 \end{bmatrix}$.
 Now $\eta_2 (t) (\zeta) = \zeta + t$ is just the translation of $\zeta$. Exploiting the real analytic diffeomorphism $\kappa \colon \SSS^2 \rightarrow \eC$, we obtain a real analytic curve 
 \begin{align*}
  \eta \colon \R \rightarrow \mathcal{S} \times \Lor, t \mapsto (h \circ \kappa, \kappa^{-1} \circ \eta_2(t) \circ \kappa).
 \end{align*}

 \textbf{Step 3:} \emph{The action is not analytic.}
 Consider first the mapping 
 $$\sigma^\wedge \colon \mathcal{S} \times \Lor \times \SSS^2 \rightarrow \R, (\alpha , f , \z) \mapsto K_f^{-1} (\z) \alpha (f(\z)).$$
 We plug in the curve from Step 2:
 \begin{align*}
  \sigma^\wedge (\eta (t),\kappa^{-1}(0)) &= K_{\eta_2(t)}^{-1}(\kappa^{-1}(0)) h(\kappa (\kappa^{-1} (\eta_2 (t)(\kappa (\kappa^{-1}(0)))) = K_{\eta_2 (t)}^{-1}(0) h(\eta_2 (t)(0)) \\
  &= (1+t^2) h(0+t) = (1+t^2) \frac{\chi(t)}{1+t^2} \theta (\chi(t) t) = \chi (t)\theta (\chi(t)t)
 \end{align*}
We deduce that whenever $|t|<42$ we have $\sigma^\wedge (\eta (t),\kappa^{-1}(0)) = \theta (t)$, so the composition of $\sigma^\wedge$ with our chosen analytic curve is not real analytic (at least on an open interval around $0$. To complete the proof, assume that $\sigma \circ \eta$ was analytic with values in $C^\infty (\SSS^2,\R)$. Then \cite[Lemma 42.12 (2)]{KaM97} shows that for every $\z \in \SSS^2$ the curve $(\sigma \circ \eta )^\wedge (\cdot ,\z) = \sigma^\wedge \circ \eta (\cdot)(\z) \colon \R \rightarrow \R$ admits a holomorphic extension, i.e.~is real analytic.\footnote{We remark that the cited sources uses real analyticity in the sense of convenient calculus. In general the concept of real analyticity differs from our preferred concept of real analyticity. However, since all manifolds and spaces involved are modeled on \Frechet~spaces, the two concepts are known to coincide.} However, we have already seen that for $\z =\kappa^{-1}(0))$ is not real analytic in an open neighborhood of $0$, whence $\sigma \circ \eta$ can not be real analytic. We deduce that the action $\sigma$ can not be real analytic.

The statement about the BMS group now follows directly from the definition of the multiplication for semidirect products.
\end{proof} 

\begin{rem}
 As the proof shows, the action can not be analytic since it contains an evaluation of a smooth function on the sphere. In the evaluated component this process can not be analytic since we can have smooth but nowhere analytic mappings. The problem could be remedied by passing to the subspace of real analytic functions inside of the supertranslations $\mathcal{S}$. Note however, that the natural topology on the analytic functions is not the subspace topology induced by the inclusion in $\mathcal{S}$. Indeed the topology of the analytic functions is not a \Frechet~topology and the construction of a Lie group structure for this subgroup of the BMS group would require refined tools for manifolds of analytic mappings (see e.g.~\cite{DaS15} for more information on such tools and the topology on manifolds of real analytic mappings).   
\end{rem}

We will now establish several Lie theoretic properties of the BMS group. Let us first recall the notion of regularity for infinite-dimensional Lie groups: A Lie group $G$ is $C^r$-semiregular, $r\in \N_0 \cup \{\infty\}$ if for every $C^r$-curve $\gamma \colon [0,1] \rightarrow \Lf (G))$ the initial value problem 
  \begin{equation}\begin{cases}
     \eta'(t) =  T_{\one} \rho_{\eta (t)} (\gamma(t)), \qquad \rho_{g}(h):= h g\\
     \eta (0) = \one
    \end{cases}
 \end{equation}
 has a unique $C^{r+1}$-solution $\Evol (\gamma) := \eta \colon [0,1] \rightarrow G$. Moreover, if in addition the evolution map 
 $\evol \colon C^r ([0,1],\Lf (G))  \rightarrow G, \gamma \mapsto \Evol (\gamma)(1)$ is smooth, we say $G$ is $C^r$-regular. If $G$ is $C^\infty$-regular (the weakest of the regularity conditions), $G$ is called \emph{regular (in the sense of Milnor)}. To employ advanced techniques in infinite-dimensional Lie theory, one needs to require regularity of the Lie groups involved, cf.\ \cite{hg2015}. Note that for a constant curve $\eta(t) \equiv = v\in \Lf(G)$ we simply recover the Lie group exponential $\evol (\eta (t)) = \exp (v)$. Thus every regular infinite-dimensional Lie group admits a Lie group exponential.

\begin{prop}\label{prop:BMS_regular}
 The BMS group is $C^0$-regular and in particular it is regular in the sense of Milnor.
\end{prop}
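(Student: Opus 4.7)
The plan is to exploit the semidirect product decomposition $\BMS \cong \mathcal{S} \rtimes_\sigma \Lor$ from \Cref{setup:semidirect} together with the regularity of the two factors. The Lorentz group $\Lor$ is finite-dimensional, hence $C^0$-regular by the classical Picard--Lindel\"{o}f theorem. The supertranslations $\mathcal{S}$ form an abelian \Frechet~Lie group under addition, for which the evolution equation reduces to elementary integration,
\[
 \Evol_{\mathcal{S}} (\gamma)(t) = \int_0^t \gamma (\tau) \dif \tau ,
\]
yielding a continuous linear (and hence smooth) map $C^0([0,1],\mathcal{S}) \to \mathcal{S}$, since $\mathcal{S}$ is sequentially complete. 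Thus both factors are $C^0$-regular and the Lie algebra of $\BMS$ is $\bms = \mathcal{S} \rtimes_{\hat{\sigma}} \mathfrak{so}(1,3)$.

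I would then invoke the general principle that a semidirect product of $C^0$-regular Lie groups with a smooth defining action is again $C^0$-regular, a standard fact in infinite-dimensional Lie theory documented for instance in Neeb's survey. Concretely, given a continuous curve $\gamma = (\gamma_1,\gamma_2)\colon [0,1] \to \bms$, one first solves the Lie type equation on the finite-dimensional factor $\Lor$ to obtain a $C^1$-curve $\eta_2 \colon [0,1] \to \Lor$ with $\eta_2(0) = \one$ and $\eta_2'(t) = T_{\one}\rho_{\eta_2(t)}(\gamma_2(t))$, and then defines the $\mathcal{S}$-component $\eta_1$ by an integral twisted via the action $\hat{\sigma}(\eta_2(\cdot))$. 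Computing the right-logarithmic derivative of $(\eta_1,\eta_2)$ using the semidirect product multiplication law from \Cref{setup:semidirect} shows that the resulting curve is the unique $C^1$-solution of \eqref{eqn:Lie_type_ODE} on $\BMS$.

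The main obstacle, should one prefer a self-contained argument to citing the semidirect product theorem, is to verify smoothness of $\evol_{\BMS}\colon C^0([0,1],\bms) \to \BMS$ as a map between infinite-dimensional manifolds. This reduces to two ingredients: smooth parameter dependence of solutions to finite-dimensional ODEs (yielding smoothness of $\gamma_2 \mapsto \eta_2$), and smoothness of the assignment $(\gamma_1,\eta_2) \mapsto \eta_1$. The latter combines the canonical manifold structure on mapping spaces over the compact sphere (\Cref{prelim:ck-top}), the smoothness of the action $\sigma$ and of the conformal factor map established in \Cref{setup:realana}, and the smoothness of the parameter integral valued in a \Frechet~space. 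Once $\evol_{\BMS}$ is shown to be smooth, $C^0$-regularity is established; regularity in the sense of Milnor follows immediately, since $C^\infty$-regularity is the weakest of the regularity notions recalled above.
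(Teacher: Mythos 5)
Your proposal is correct and follows essentially the same route as the paper: both reduce $C^0$-regularity of $\BMS$ to that of the factors $\Lor$ (finite-dimensional) and $\mathcal{S}$ (a complete, hence Mackey complete, locally convex space) and then invoke the fact that regularity passes to the semidirect product --- the paper phrases this as the extension property of $C^k$-regularity for the short exact sequence $\mathcal{S} \hookrightarrow \BMS \twoheadrightarrow \Lor$. Your additional sketch of the explicit twisted-integral construction of the evolution is a correct unpacking of that cited result, not a different argument.
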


\begin{proof}
 We have seen in \Cref{setup:semidirect} that the BMS group is the semidirect product $\mathcal{S} \rtimes_\sigma \Lor$. Thus the BMS group is a Lie group extension of the conformal group by the supertranslations and with the canonical injection $\mathcal{S} \rightarrow \mathcal{S} \times \{\one\}$ and the projection we obtain a short exact sequence of Lie groups:
\begin{displaymath}\label{Lgp:ext}
 \begin{tikzcd}
0 \arrow[rr] & &  \mathcal{S} \arrow[rr, hookrightarrow] & &\BMS \arrow[rr, twoheadrightarrow] && \Lor \ar[rr]&& \mathbf{1}.
\end{tikzcd}   
 \end{displaymath}
 Now $C^k$-regularity of Lie groups is an extension property \cite{NaS13}, whence it suffices to prove that $\Lor$ and $\mathcal{S}$ are $C^0$-regular. However, $\Lor$ is $C^0$-regular as a finite dimensional Lie group. By definition $\mathcal{S}$ is a the abelian Lie group given by the \Frechet~space structure of $C^\infty (\SSS^2)$. It is well known \cite{Neeb06} that locally convex spaces are regular Lie groups if and only if they are Mackey complete. Now a \Frechet~space is complete and thus in particular Mackey complete. We deduce that $\mathcal{S}$ and thus also $\BMS$ are $C^0$-regular. 
\end{proof}

In general, the Lie group exponential need not induce a local diffeomorphism from the Lie algebra to an open identity neighborhood in an infinite-dimensional group (the most famous example of this pathology is the diffeomorphism group $\Diff (M)$ of a compact manifold, which is a Lie group whose Lie group exponential is not onto any neighborhood of the identity). Since we are now entering finer aspects of infinite-dimensional Lie theory, it is prudent to recall first several definitions:

\begin{defn}
 Let $G$ be a regular Lie group and $\Lf (G) = \mathfrak{g}$ be its Lie algebra. We say that 
 \begin{enumerate}
  \item $G$ is a \emph{(locally) exponential Lie groups} if the Lie group exponential $\exp\colon \mathfrak{g} \rightarrow G$ induces a (local) diffeomorphism onto an identity neighborhood.
  \item $\mathfrak{g}$ is a \emph{Baker--Campbell--Hausdorff (BCH) Lie algebra} if there exists an open $0$-neighborhood $U \subseteq \mathfrak{g}$ such that for $x, y \in U$ the BCH-series 
  $$\sum_{n=1}^\infty H_n(x,y) \text{ with } H_1 (x,y) = x+y, H_2 = \frac{1}{2} \LB[x,y] , \ldots ,$$
  converges and defines an analytic function $U \times U \rightarrow \mathfrak{g}, (x,y) \mapsto x \star y$ (note that if $\mathfrak{g}$ is a \Frechet~space, the analyticity of the product $x\star y$ is automatic (cf.~\cite[Definition IV.1.5]{Neeb06})
  \item $G$ is called a \emph{BCH Lie group} if $G$ is locally exponential and $\Lf (G)$ is BCH.
 \end{enumerate}
\end{defn}

We remark that contrary to claims in the BMS related physics literature (see \cite[p. 7]{Melas:2017jzb}) there are examples of infinite-dimensional Lie groups (beyond the Banach setting) which are locally exponential (as a trivial example the abelian Lie group $C^\infty (\SSS^2)$ has the identity as exponential function, whence it is even a global diffeomorphism). 
Note that every abelian Lie group and every Lie group modeled on a Banach space is BCH (\cite[Corollary IV.1.10]{Neeb06}.  
Due to the semidirect product structure, we can describe the exponential function of the BMS group, though it is unclear to us as to whether the BMS is even locally exponential:

\begin{setup}
 Following \cite{NaS13} (cf.~also \cite[II.5.9]{Neeb06}, the Lie group exponential of the BMS group is given by 
 \begin{align}
\exp \colon \mathcal{S} \rtimes \Lf (\mathcal{M}) = \Lf(\BMS) \rightarrow \BMS = \mathcal{S} \rtimes_\sigma \mathcal{M}, (F,f) \mapsto (\beta(f)F,\exp_{\mathcal{M}} (f)) \label{expo:form}\\
\text{ with } \beta (f) := \int_0^1 \sigma^\vee(\exp_{\mathcal{M}} (sf)) \mathrm{d}s. \notag
 \end{align}
 where $\exp_{\mathcal{M}}$ is the Lie group exponential of the M\"{o}bius group (which can be identified (up to the quotient) with the matrix exponential $\mathfrak{sl}(2,\bC) \rightarrow \SL (2,\bC)$. 
 We remark that $\beta$ is an operator valued integral (which gives meaning as an integral after applying it to elements $F \in C^\infty (\SSS^2)$). 
\end{setup}

\begin{rem}
 The formula \eqref{expo:form} shows that the Lie group exponential of the BMS group is a local diffeomorphism onto an open neighborhood of the identity if and only if the operator valued integrals $\beta(f)$ are (smoothly) invertible. This unfortunately turns out to be a complicated problem (cf.~e.g.~\cite[Problem IV.4]{Neeb06} on a related question). Note that it is neither clear that $\beta(f)$ is an invertible operator (say for $f \in \Lf(\mathcal{M})$ in some $0$-neighborhood) nor that the assignment $(f,F) \mapsto \beta(f)^{-1}F$ is smooth.  
 
As already reported, we were unfortunately not able to solve the question, whether the BMS group is locally exponential. However, based on the observations for the three-dimensional case and the Newman--Unti group, we propose:

\begin{conj} \label{conj:BMS_not_locally_exponential}
	The BMS group is not locally exponential.
\end{conj}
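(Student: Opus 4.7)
The plan is to exploit the formula \eqref{expo:form}: since $\exp_{\BMS}(F,f) = (\beta(f)F,\exp_\mathcal{M}(f))$ and $\exp_\mathcal{M}$ is a local diffeomorphism at $0 \in \Lf(\mathcal{M})$, local exponentiality of $\BMS$ is equivalent to the operator family $\beta(f)$ on $\mathcal{S}$ being smoothly invertible for $f$ in some $0$-neighborhood. A diffeomorphism of \Frechet~manifolds has an invertible derivative at every point of its domain (apply the chain rule to its smooth inverse). Hence it suffices to produce a sequence $f_n \to 0$ in $\Lf(\mathcal{M})$ for which $\beta(f_n)$ is not injective: then the partial derivative $G \mapsto \beta(f_n) G$ appearing as a block of $D\exp_{\BMS}|_{(0,f_n)}$ has a nonzero kernel, and since $(0,f_n) \to 0$ every $0$-neighborhood in $\Lf(\BMS)$ harbours such a pathological point.

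For the sequence I would take the rotation generators
\begin{equation*}
f_n := \frac{\pi}{n}\begin{bmatrix} i & 0 \\ 0 & -i\end{bmatrix} \in \mathfrak{sl}(2,\bC),
\end{equation*}
regarded as elements of $\Lf(\mathcal{M})$ via the canonical quotient. The matrix exponential is diagonal, so $\exp_\mathcal{M}(sf_n)$ corresponds to the M\"obius transformation $\zeta \mapsto e^{2\pi is/n}\zeta$ of the Riemann sphere, which under $\kappa$ becomes the rotation of $\SSS^2$ by angle $2\pi s/n$ about the vertical axis. Substituting into the conformal-factor formula of \Cref{smooth:complex} one finds $K_{\exp_\mathcal{M}(sf_n)} \equiv 1$, so the integral collapses to pure averaging along the orbit,
\begin{equation*}
\beta(f_n)F = \int_0^1 F \circ \exp_\mathcal{M}(sf_n)\,\dif s.
\end{equation*}

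The kernel is then read off by Fourier analysis in the azimuthal angle. Set $h_n(\theta,\phi) := \sin^n(\theta)\cos(n\phi)$; this is a smooth real function on $\SSS^2$ since in a Cartesian chart $(\xi,\eta)$ near either pole it agrees with the polynomial $\mathrm{Re}((\xi+i\eta)^n)$. As $\exp_\mathcal{M}(sf_n)$ shifts $\phi$ by $2\pi s/n$,
\begin{equation*}
\beta(f_n)h_n(\theta,\phi) = \sin^n(\theta)\int_0^1 \cos(n\phi + 2\pi s)\,\dif s = 0,
\end{equation*}
since cosine integrates to zero over one full period. Hence $h_n$ is a nonzero element of $\ker\beta(f_n)$, and from \eqref{expo:form} we get $D\exp_{\BMS}|_{(0,f_n)}(h_n,0) = (\beta(f_n)h_n,0) = 0$. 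As $f_n \to 0$, this rules out local exponentiality.

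The main obstacle I foresee lies in the very first step: without a clean eigen-decomposition of $\beta(f)$ for generic $f$, the failure of injectivity is hard to witness directly. Rotation generators are tailor-made for the argument because the conformal factor collapses to $1$, so $\beta(f_n)$ degenerates to a pure averaging operator diagonalized by spherical harmonics; for boost or mixed generators the nontrivial conformal factor would make the corresponding spectral analysis substantially more involved. Apart from this, the only delicate step is the invocation of the chain rule in the \Frechet~setting to pass from non-injectivity of $D\exp_{\BMS}$ arbitrarily close to $0$ to the failure of local exponentiality.
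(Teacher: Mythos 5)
You should first be aware that the paper does \emph{not} prove this statement: it is explicitly left as a conjecture, supported only by the indirect observation that non-analyticity (\Cref{prop:BMS:notanalytic}) together with \cite[Theorem IV.1.8]{Neeb06} forces the $\BMS$ group to fail to be locally exponential \emph{or} to have a non-BCH Lie algebra. Your argument attacks the question directly, and as far as I can check it is correct and would settle \Cref{conj:BMS_not_locally_exponential} affirmatively. The computation is sound: for $f_n = \tfrac{\pi}{n}\operatorname{diag}(i,-i)$ the curve $s \mapsto \exp_{\mathcal{M}}(sf_n)$ is the rotation $\zeta \mapsto e^{2\pi i s/n}\zeta$, the conformal factor of \Cref{smooth:complex} is identically $1$ for these diagonal matrices (so the twist by $K_f^{-1}$ in $\sigma$ disappears), and $\beta(f_n)$ reduces to averaging over azimuthal rotations through one full period $2\pi/n$ of the $n$-th Fourier mode; hence it annihilates $h_n = \sin^n(\theta)\cos(n\phi)$, which is the restriction to $\SSS^2$ of the polynomial $\mathrm{Re}\big((x+iy)^n\big)$ and therefore smooth and nonzero. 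Since $\exp_{\BMS}(F,f) = (\beta(f)F, \exp_{\mathcal{M}}(f))$ is \emph{linear} in $F$, the directional derivative at $(0,f_n)$ in the direction $(h_n,0)$ is $(\beta(f_n)h_n,0) = 0$ without any need to differentiate $f \mapsto \beta(f)$, and the chain rule for Bastiani-smooth maps (which is valid on \Frechet~spaces) shows that a diffeomorphism has bijective differentials at every point of its domain; so no $0$-neighborhood on which $\exp_{\BMS}$ restricts to a diffeomorphism can contain any $(0,f_n)$. Conceptually this is the classical spectral obstruction: the derived action $\hat{\sigma}(f_n)$ acts on the mode $\sin^n(\theta)e^{in\phi}$ with eigenvalue $2\pi i$, and $\int_0^1 e^{2\pi i s}\,\mathrm{d}s = 0$; because the derived action of a rotation generator on $C^\infty(\SSS^2)$ is an unbounded operator, arbitrarily small generators capture $2\pi i$ in their spectrum. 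Two points deserve care in a write-up. First, you only need one direction of the ``equivalence'' you state at the outset, namely that local exponentiality forces the differentials near $0$ to be injective; do not lean on the converse, which is the part the paper flags as delicate. Second, the paper's $\sigma$ is a right action, so depending on how one fixes the homomorphism $\sigma^\vee$ the integrand in $\beta(f_n)$ may involve $\exp_{\mathcal{M}}(-sf_n)$ instead of $\exp_{\mathcal{M}}(sf_n)$; this changes nothing, as the same $h_n$ lies in the kernel either way. Your closing worry about generic $f$ is immaterial: exhibiting the single sequence $f_n \to 0$ of rotation generators already suffices.
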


Evidence toward this claim is the observation that the BMS group is not analytic by \Cref{prop:BMS:notanalytic}. Thus \cite[Theorem IV.1.8]{Neeb06} implies that $\BMS$ can not be locally exponential \textbf{or} the Lie algebra $\Lf (\BMS)$ can not be BCH (i.e.\ since $\Lf(\BMS)$ is a \Frechet~space, this means that the BCH series can not converge on any neighborhood of $0$). Thus if one could show that the BCH-series converges in a $0$-neighborhood, this would imply that the BMS group can not be locally exponential.  
\end{rem}

In any case, $\BMS$ can not be a BCH Lie group and in particular there is no way, the BCH series can be used as a local model for the multiplication of the $\BMS$ group.
However, we have already seen in \Cref{prop:BMS_regular} that the BMS group is $C^0$-regular. This property allows us to recover the validity of certain familiar product formulae:

\begin{cor} \label{cor:BMS_Trotter}
 The $\BMS$ group satisfies the strong Trotter property, i.e.~for every $C^1$-curve $\gamma \colon [0,1] \rightarrow \BMS$ with $\gamma (0) = (0,\id)$ we have 
 $$(\gamma(t/n))^n \rightarrow \exp (t\gamma'(0)), \text{ as } n \rightarrow \infty.$$
 where convergence here is uniform in $t$ on compact subsets. In particular, the $\BMS$ satisfies the Trotter property, i.e.~$\forall v,w \in \Lf (\BMS)$ the following holds (again with uniform convergence in $t$ on compact subsets)
 $$\lim_{ n \rightarrow \infty} (\exp (tv/n) \exp(tw/n))^n = \exp (t(v+w)).$$
 \end{cor}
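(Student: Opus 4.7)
The plan is to deduce both statements from the $C^0$-regularity established in \Cref{prop:BMS_regular}, together with the general fact that every $C^0$-regular infinite-dimensional Lie group automatically enjoys the strong Trotter property (see Hanusch~\cite{hg2015}). So the first step is simply to invoke this abstract implication: since the hypotheses for $\BMS$ have already been verified (namely, the existence of a smooth evolution operator on continuous Lie-algebra valued curves, and the fact that $\BMS$ is modeled on a locally convex space), the strong Trotter statement follows without any further computation.

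For the second claim, I would specialize the strong Trotter property to the smooth curve
$$\gamma \colon \R \rightarrow \BMS, \qquad \gamma(t) := \exp(tv)\exp(tw),$$
for fixed $v,w \in \Lf(\BMS)$. This curve is certainly $C^1$, since the BMS multiplication and the Lie group exponential are smooth; moreover $\gamma(0) = (0,\id)$ and, via the chain rule together with $T_{0}\exp = \id_{\Lf(\BMS)}$ and the standard formula for the differential of the group multiplication at the identity, one computes $\gamma'(0) = v + w$. Plugging $\gamma$ into the strong Trotter formula then produces exactly
$$\lim_{n \to \infty} \bigl(\exp(tv/n)\exp(tw/n)\bigr)^n = \lim_{n \to \infty} \bigl(\gamma(t/n)\bigr)^n = \exp(t(v+w)),$$
with uniform convergence on compact subsets of the parameter $t$ inherited directly from the uniform convergence in the strong Trotter conclusion.

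The only nontrivial step is the general implication ``$C^0$-regularity $\Rightarrow$ strong Trotter property'', which I do not reprove but cite from \cite{hg2015}. Everything else is routine: identifying $\gamma'(0)$ via the chain rule and rescaling the parameter. I therefore do not anticipate any real obstacle; the corollary is essentially a direct application of the regularity property established in the previous proposition, packaged via the abstract Trotter theorem for regular Lie groups.
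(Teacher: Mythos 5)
Your proposal is correct and follows essentially the same route as the paper: the strong Trotter property is deduced directly from the $C^0$-regularity established in \Cref{prop:BMS_regular} via the abstract implication for regular Lie groups, and the ordinary Trotter formula then follows by specializing to the $C^1$-curve $\gamma(t)=\exp(tv)\exp(tw)$ with $\gamma(0)=(0,\id)$ and $\gamma'(0)=v+w$. The only minor discrepancy is bibliographic: the paper cites \cite[Theorem 1]{Han20} for the implication ``$C^0$-regular $\Rightarrow$ strong Trotter'', whereas you point to \cite{hg2015}.
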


 \begin{proof}
  It suffices to establish the strong Trotter property as the Trotter property follows from it by setting $\gamma (t) = \exp(tv)\exp(tw)$. Now the strong Trotter property follows from $C^0$-regularity of the $\BMS$ group by \cite[Theorem 1]{Han20}.
 \end{proof}

Moreover, it is known that the strong Trotter property implies the (strong) commutator property \cite[Theorem H]{glo15}. We thus deduce again from $C^0$-regularity of the $\BMS$ group:

\begin{cor} \label{cor:BMS_strong_commutator}
 The $\BMS$ group has the strong commutator property, i.e. for every pair 
 of $C^1$-curve $\gamma , \eta \colon [0,1] \rightarrow \BMS$ with $\gamma(0)=(0,\id)=\eta(0)$ we have for $n \rightarrow \infty$ 
 $$(\gamma (\sqrt{t}/n)\eta(\sqrt{t}/n)(\gamma (\sqrt{t}/n))^{-1}(\eta(\sqrt{t}/n))^{-1})^{n^2} \rightarrow \exp (t\LB[\gamma'(0),\eta'(0)]).$$
 (uniformly in $t$ on compact subsets). Thus in particular for $v,w\in \Lf(\BMS)$ the familiar commutator formula holds
 $$(\exp(\sqrt{t}v/n)\exp(\sqrt{t}w/n)(\exp(\sqrt{t}v/n))^{-1}(\exp(\sqrt{t}w/n))^{-1})^{n^2} \rightarrow \exp(t\LB[v,w])$$
\end{cor}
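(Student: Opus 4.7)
The plan is to piggyback on the work already done for Corollary~\ref{cor:BMS_Trotter}. By Proposition~\ref{prop:BMS_regular} the BMS group is $C^0$-regular, and by Corollary~\ref{cor:BMS_Trotter} this entails the strong Trotter property. The strong commutator property is a formal consequence of the strong Trotter property by \cite[Theorem H]{glo15}, so the first statement is immediate once we have put ourselves in a position to quote that result.

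Concretely, I would write: let $\gamma,\eta\colon [0,1] \rightarrow \BMS$ be $C^1$-curves with $\gamma(0)=(0,\id)=\eta(0)$. Because $\BMS$ is $C^0$-regular (\Cref{prop:BMS_regular}), the strong Trotter property holds (\Cref{cor:BMS_Trotter}), so the hypothesis of \cite[Theorem H]{glo15} is satisfied. Applying that theorem yields the displayed limit
\[
 (\gamma (\sqrt{t}/n)\eta(\sqrt{t}/n)(\gamma (\sqrt{t}/n))^{-1}(\eta(\sqrt{t}/n))^{-1})^{n^2} \longrightarrow \exp \left(t\LB[\gamma'(0),\eta'(0)]\right),
\]
with uniform convergence in $t$ on compact subsets of $[0,\infty[$.

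For the classical commutator formula I would then specialise the curves to one-parameter subgroups. Fix $v,w\in \Lf(\BMS)$ and set $\gamma(t)=\exp(tv)$, $\eta(t)=\exp(tw)$. Both are smooth curves with the required initial values, and their derivatives at $0$ equal $v$ and $w$ respectively. Plugging these choices into the formula above (and noting that $\exp(sv)^{-1}=\exp(-sv)$) reproduces the Trotter-type commutator identity displayed in the statement.

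I do not expect any genuine obstacle here: the content of the corollary lies entirely in \Cref{prop:BMS_regular} and the abstract implication \emph{$C^0$-regular $\Rightarrow$ strong Trotter $\Rightarrow$ strong commutator}. The only thing one has to be mildly careful about is that \cite{glo15} and \cite{Han20} state these implications for general regular locally convex Lie groups; since $\BMS$ is modeled on a \Frechet~space (and in particular is a locally convex Lie group in the sense of these references), the hypotheses of both results apply without modification.
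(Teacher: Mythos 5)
Your proposal matches the paper's own argument: the authors likewise deduce the strong commutator property from the strong Trotter property (established in \Cref{cor:BMS_Trotter} via $C^0$-regularity, \Cref{prop:BMS_regular}) by invoking \cite[Theorem H]{glo15}, and obtain the classical formula by specialising to one-parameter subgroups. Your additional remark about checking that $\BMS$ is a locally convex Lie group so the cited theorems apply is a sensible precaution but not an extra idea beyond what the paper does.
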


Both the Trotter product formula and the commutator formula are useful in representation theory (see e.g.~\cite{NaS13}) and should thus be of interest in the finer analysis of representations of the $\BMS$ group.

\subsection{Extended and generalized BMS group} \label{ssec:eBMS_and_gBMS}

In this section we briefly discuss recently introduced variants of the BMS group together with their properties. They arise from the original BMS group by replacing the Lorentz group \(\Lor\) in the semidirect product structure of Equation~\eqref{eqn:BMS_semidirect_product} by an appropriate, larger, symmetry group, which is conveniently called `superrotations' \(\mathcal{R}\), cf.\ \cite{Barnich:2011ct}. Most prominently, we mention
\begin{itemize}
	\item the extended BMS (eBMS) due to Barnich and Troessaert \cite{Barnich:2009se,Barnich:2010eb,Barnich:2011mi}, and
	\item the generalized BMS (gBMS) due to Campiglia and Laddha \cite{Campiglia:2014yka,Campiglia:2015yka,Campiglia:2015kxa}.
\end{itemize}
The extended BMS uses two copies of the Bott--Virasoro group $\widehat{\Diff} (\SSS^1)$ as superrotations, \(\mathcal{R} := \widehat{\Diff} (\SSS^1) \times \widehat{\Diff} (\SSS^1)\) (here the notation indicates that the Bott--Virasoro group is a central extension of the circle diffeomorphisms $\Diff (\SSS^1)$, cf.~\cite[II.2]{KaW09}). For the generalized BMS, the superrotations are given by the diffeomorphism group of the two-sphere, \(\mathcal{R} := \operatorname{Diff} ( \mathbb{S}^2 )\). The physical difference lies in the precise definition of asymptotic flatness that is preserved by the respective asymptotic symmetry group. We remark that while the eBMS emerged from the context of the AdS-CFT correspondence, the gBMS was constructed with gravitational soft-scattering theorems in mind.
Finally we mention that in \cite{HaHaPaB17} an even more involved extension of the BMS group, the so called conformal BMS group (or cBMS) was proposed. Investigating the eBMS and cBMS is beyond the scope of the present article. However, as the  generalized BMS group is relatively easy to describe, we will now describe the associated Lie theory along the lines of our treatment of the BMS group.

\subsubsection*{Lie theory for the generalized BMS group}

\begin{defn}
 The generalized BMS group is the semidirect product 
 $$\gBMS := \mathcal{S} \rtimes_\alpha \Diff (\SSS^2)$$
 where $\Diff (\SSS^2)$ denotes the (Lie group) of smooth diffeomorphisms of the sphere, in the literature known as `superrotations', and the action inducing the semidirect product is 
 $$\alpha \colon \mathcal{S} \times \Diff (\SSS^2 ) \rightarrow \mathcal{S},\quad (F,\phi ) \mapsto F\circ \phi.$$
\end{defn}
Since general diffeomorphisms of the sphere do not preserve the conformal structure this structure is not preserved by elements in the $\gBMS$. However, it is easy to see that the $\gBMS$ is an infinite-dimensional Lie group

\begin{prop}\label{gBMS}
 The group $\gBMS$ is an infinite-dimensional \Frechet~Lie group which is $C^0$-regular. Moreover, its Lie algebra is the semidirect product $\mathcal{S} \rtimes_{\hat{\alpha}} \mathfrak{X}(\SSS^2)$, where $\hat{\alpha}$ is the derived action of the Lie algebra of vector fields on the sphere on the supertranslations.
\end{prop}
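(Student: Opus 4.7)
The plan is to mirror the treatment of the $\BMS$ group from \Cref{setup:semidirect} and \Cref{prop:BMS_regular}, replacing the finite-dimensional Lorentz factor by the diffeomorphism group $\Diff (\SSS^2)$. First I would invoke the classical fact (see e.g.\ \cite{KaM97,Michor80,hg2015}) that, for the compact manifold $\SSS^2$, the group $\Diff (\SSS^2)$ is an open subset of $C^\infty (\SSS^2,\SSS^2)$ and carries the structure of a \Frechet~Lie group with Lie algebra $\mathfrak{X}(\SSS^2)$, which is moreover $C^0$-regular (this is a standard result for $\Diff (M)$ whenever $M$ is compact, proved via solutions of time-dependent ODEs on $M$).

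Next I would verify that the action $\alpha\colon \mathcal{S} \times \Diff (\SSS^2) \to \mathcal{S},\ (F,\phi) \mapsto F \circ \phi$ is smooth in the Bastiani sense. This is immediate from smoothness of the composition map $\text{Comp}\colon C^\infty (\SSS^2,\SSS^2) \times C^\infty (\SSS^2,\R) \to C^\infty (\SSS^2,\R)$ used in \Cref{setup:realana} (cf.\ \cite[Theorem 11.4]{Michor80}), since $\alpha$ is just the restriction of $\text{Comp}$ to the open subset $\Diff (\SSS^2) \subseteq C^\infty (\SSS^2,\SSS^2)$ in its second argument, composed with a flip of factors. With smoothness of $\alpha$ in hand, the general semidirect product construction recalled in the preliminaries turns $\gBMS = \mathcal{S} \rtimes_\alpha \Diff (\SSS^2)$ into a \Frechet~Lie group, with multiplication $(F,\phi)\cdot(G,\psi) = (F + G\circ\phi,\phi\circ\psi)$.

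For $C^0$-regularity I would argue exactly as in \Cref{prop:BMS_regular}: the short exact sequence
\begin{displaymath}
\begin{tikzcd}
0 \arrow[r] & \mathcal{S} \arrow[r, hookrightarrow] & \gBMS \arrow[r, twoheadrightarrow] & \Diff (\SSS^2) \arrow[r] & \mathbf{1}
\end{tikzcd}
\end{displaymath}
realises $\gBMS$ as an extension of two $C^0$-regular Lie groups ($\mathcal{S}$ is $C^0$-regular since it is a \Frechet~space, and $\Diff (\SSS^2)$ is $C^0$-regular by the cited classical result), so regularity passes to the extension by \cite{NaS13}. Finally, the Lie algebra computation is a direct application of the functoriality of $\Lf$ on semidirect products (see \cite[Proposition 9.2.25]{HaN12} as used in \Cref{setup:semidirect}): $\Lf (\gBMS) = \mathcal{S} \rtimes_{\hat\alpha} \mathfrak{X}(\SSS^2)$ where $\hat{\alpha}$ is the derived representation of $\Lf (\Diff (\SSS^2)) = \mathfrak{X}(\SSS^2)$ on $\mathcal{S}$ obtained by differentiating $\alpha$; concretely $\hat{\alpha}(X)(F) = \mathcal{L}_X F$ is the Lie derivative.

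The only non-routine input is the $C^0$-regularity of $\Diff (\SSS^2)$, which rests on the compactness of $\SSS^2$ and the existence/smooth dependence of flows of time-dependent vector fields; I would simply quote this rather than reprove it. Smoothness of the action and the semidirect product structure are then straightforward from the toolbox of Bastiani calculus on mapping spaces that was already assembled for the $\BMS$ discussion.
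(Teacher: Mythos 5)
Your proposal is correct and follows essentially the same route as the paper's proof: smoothness of $\alpha$ via smoothness of the composition map on spaces of smooth maps over a compact manifold, the general semidirect product construction, and $C^0$-regularity via the extension property combined with the known $C^0$-regularity of $\Diff(\SSS^2)$ and of the \Frechet~space $\mathcal{S}$. The extra details you supply (the explicit multiplication formula and the identification $\hat{\alpha}(X)(F)=\mathcal{L}_X F$) are consistent elaborations rather than a different argument.
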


\begin{proof}
 Note first that the group $\Diff (\SSS^2)$ is an open subset of $C^\infty (\SSS^2,\SSS^2)$ and this structure turns it into an infinite-dimensional Lie group, \cite[Theorem 11.11]{Michor80}.
 The statement on the Lie group structure and the Lie algebra will follow at once, if we can prove that $\alpha$ is smooth. However, since $$\alpha \colon \mathcal{S} \times \Diff (\SSS^2) = C^\infty (\SSS^2) \times \Diff (\SSS^2) \rightarrow C^\infty (\SSS^2), (F,\phi) \mapsto F\circ \phi$$
 is simply the composition, smoothness follows at once from smoothness of the composition operation on spaces of smooth function spaces on compact manifolds (see e.g.~\cite[Theorem 11.4]{Michor80}).
 
 Concerning the $C^0$-regularity of $\gBMS$ we can argue as in the proof of \Cref{prop:BMS_regular}: Since $C^0$-regularity is an extension property, it suffices to prove that $\Diff (\SSS^2)$ is $C^0$-regular. However, this is well known (cf.~e.g.~\cite{Milnor,Sch15}).
\end{proof}

\begin{rem}
 Since $\Diff (\SSS^2)$ is not locally exponential, also $\gBMS$ can not be a locally exponential Lie group. However, as seen in the section on the BMS group, we can deduce from the $C^0$-regularity of the group $\gBMS$ the following information: $\gBMS$ satisfies the (strong) Trotter condition and the (strong) commutator formula holds on $\gBMS$. 
\end{rem}

Due to the nature of the group product of the generalized BMS group there seems to be no natural way to embed the BMS group as a subgroup. It is of course straight forward to check that the generalized BMS group contains the BMS group as a set. However this inclusion does not respect the group product of the BMS. Since general diffeomorphisms of a manifold do not respect conformal structures, there seems to be no way to remedy this defect of the generalized BMS group by changing for example the action inducing the semidirect product.

Note that the situation is different for the so called extended BMS group proposed in \cite{Barnich:2009se}, see also \cite{Ruzz20}. It can be shown that the extended BMS group arises as a group preserving the conformal structure of the punctured extended complex plane. Thus the group product should be compatible with the natural embedding of the BMS group. However a detailed study of the extended BMS group as a Lie group is beyond the scope of the present article.

\section{Conclusion}

We have studied the BMS group from the viewpoint of infinite-dimensional Lie group theory. In particular, we have shown that the BMS group is not analytic (\propref{prop:BMS:notanalytic}). This implies that the BMS group is either not locally exponential or its Lie algebra is not BCH, which leads us to conjecture the first (Conjecture~\ref{conj:BMS_not_locally_exponential}). Furthermore, we have established that the BMS group is \(C^0\)-regular in the sense of Milnor (\propref{prop:BMS_regular}) and that it satisfies the strong Trotter property (\colref{cor:BMS_Trotter}) as well as the strong commutator property (\colref{cor:BMS_strong_commutator}). Furthermore, we discuss the cases of the extended BMS group and the generalized BMS group (Subsection~\ref{ssec:eBMS_and_gBMS}) and provide an outlook to the much more involved case of the Newman--Unti group (Subsection~\ref{ssec:outlook_NU}), which will be studied in \cite{Prinz_Schmeding_2}.

\begin{appendix}
\section{A mathematics-physics dictionary}
As the goal of the present article is to provide a bridge between objects of interests to the physicist and techniques from (infinite-dimensional) differential geometry, we provide a dictionary for the readers convenience:

\begin{table}[ht]
\caption{\textbf{Mathematics-physics dictionary}}
\begin{tabularx}{\textwidth}{|p{4.6cm}|X|}\toprule
 \textbf{Object} & \textbf{Definition / Explanation}  \\  \midrule
 asymptotically flat spacetime & spacetime, where the metric tensor \(g_{\mu \nu}\) decays to the Minkowski metric tensor \(\eta_{\mu \nu}\) at `infinity' \\ \cmidrule(r){1-1}\cmidrule(lr){2-2}
  Baker--Campbell--Hausdorff (BCH) formula & $\exp(x)\exp(y) = \exp(\text{BCH}(x,y))$ for Lie algebra elements $x,y$, with $\text{BCH}(x,y) = \sum_{n=1}^\infty H_n(x,y)$ where $H_1 (x,y) = x+y, H_2 = \frac{1}{2} \LB[x,y] , \ldots$  \\ \cmidrule(r){1-1}\cmidrule(lr){2-2}
  Bondi--Metzner--Sachs (BMS) group & group of gauge fixing diffeomorphisms, semidirect product of the supertranslations and the (proper orthochronous) Lorentz group $\operatorname{BMS} := \mathcal{S} \rtimes \Lor$\\ \cmidrule(r){1-1}\cmidrule(lr){2-2}
  $\Evol \; ( \equiv \mathcal{T} \exp ) \colon $ $C([0,1],\Lf(G)) \rightarrow C^1([0,1],G)$ & evolution operator (aka ``time ordered exponential''), sends a Lie algebra valued curve to the solution of the associated Lie type differential equation \\ \cmidrule(r){1-1}\cmidrule(lr){2-2}
  $\evol \colon C([0,1],\Lf(G)) \rightarrow G$ & $\evol (\gamma) = \Evol(\gamma)(1)$ \\\cmidrule(r){1-1}\cmidrule(lr){2-2}
  generalized BMS group & $\gBMS := \mathcal{S} \rtimes_\alpha \Diff (\SSS^2)$, \Cref{gBMS}\\ \cmidrule(r){1-1}\cmidrule(lr){2-2}
  analytic Lie group & Lie group which is an analytic manifold (see Definition \ref{defn:real_analytic}) with analytic group operations.\\ \cmidrule(r){1-1}\cmidrule(lr){2-2}
  infinite-dimensional Lie group & a group which is a smooth manifold modeled on a locally convex space such that the group operations are smooth maps as in \Cref{defn: deriv}\\ \cmidrule(r){1-1}\cmidrule(lr){2-2}
 locally exponential Lie group & Lie group which admits exponential coordinates (i.e. the Lie group exponential is a local diffeomorphism near the unit)\\ \cmidrule(r){1-1}\cmidrule(lr){2-2}
 $(C^0-)$regular Lie group & Lie group for which $\evol$ exists and is smooth\\ \cmidrule(r){1-1}\cmidrule(lr){2-2}
 Newman-Unti (NU) group & $\operatorname{NU} := \mathcal{N} \rtimes \Lor$, \cite{Prinz_Schmeding_2}\\ \cmidrule(r){1-1}\cmidrule(lr){2-2}
 supertranslations & $\mathcal{S} := C^\infty (\SSS^2) \equiv C^\infty (\SSS^2,\R)$\\  \cmidrule(r){1-1}\cmidrule(lr){2-2}
 Trotter product formula & $\lim_{ n \rightarrow \infty} (\exp (tv/n) \exp(tw/n))^n = \exp (t(v+w))$\\ 
 \bottomrule
 \end{tabularx}
\end{table}

\clearpage

\section{Essentials from infinite-dimensional calculus}\label{app:idim}

In this appendix we collect background on infinite-dimensional calculus and Lie groups. We follow
\cite{Neeb06}. In the following we will always pick a field $\K \in \{ \R, \bC\}$ to be either the real or complex numbers. 
Then $\K^\times$ denotes the invertible elements in the field $\K$.
First of all we define a notion of (continuously) differentiable map which works in arbitrary locally convex spaces and is not restricted to manifolds modeled on Banach spaces. 

\begin{defn}\label{defn: deriv} Let $E, F$ be locally convex spaces over a field $\K$, $U \subseteq E$ be an open subset,
$f \colon U \rightarrow F$ a map and $r \in \N_{0} \cup \{\infty\}$. If it
exists, we define for $(x,h) \in U \times E$ the directional derivative
$$df(x,h) := D_h f(x) := \lim_{\K^\times  \ni t\rightarrow 0} t^{-1} \big(f(x+th) -f(x)\big).$$
We say that $f$ is $C^r$ if the iterated directional derivatives
\begin{displaymath}
d^{(k)}f (x,y_1,\ldots , y_k) := (D_{y_k} D_{y_{k-1}} \cdots D_{y_1}f) (x)
\end{displaymath}
exist for all $k \in \N_0$ such that $k \leq r$, $x \in U$ and
$y_1,\ldots , y_k \in E$ and define continuous maps
$d^{(k)} f \colon U \times E^k \rightarrow F$. If $f$ is $C^\infty_\R$, it is called \emph{smooth}.
 If $f$ is $C^\infty_\bC$, it is said to be \emph{complex analytic} or \emph{holomorphic} and that $f$ is of class $C^\omega_\bC$.\footnote{Recall from \cite[Proposition 1.1.16]{dahmen2011}
 that $C^\infty_\bC$ functions are locally given by series of continuous homogeneous polynomials (cf.\ \cite{BS71a,BS71b}).
 This justifies the abuse of notation.} 
\end{defn}

\begin{defn}[Complexification of a locally convex space]
 Let $E$ be a real locally convex topological vector space. Endow the locally convex product $E_\bC := E \times E$ with the following operation
 \begin{displaymath}
  (x+iy).(u,v) := (xu-yv, xv+yu), \quad \forall x,y \in \R,\; u,v \in E.
 \end{displaymath}
 The complex vector space $E_\bC$ is called the \emph{complexification} of $E$. Identify $E$ with the closed real subspace $E\times \{0\}$ of $E_\bC$.
\end{defn}

\begin{defn}	\label{defn:real_analytic}							
 Let $E$, $F$ be real locally convex spaces and $f \colon U \rightarrow F$ defined on an open subset $U \subseteq E$.
 $f$ is called \emph{real analytic} (or $C^\omega_\R$) if $f$ extends to a $C^\infty_\bC$-map $\tilde{f}\colon \tilde{U} \rightarrow F_\bC$ on an open neighborhood $\tilde{U}$ of $U$ in the complexification $E_\bC$.
\end{defn}

For $r \in \mathbb{N}_0 \cup \{\infty, \omega\}$, being of class $C^r_\K$ is a local condition,
i.e.\ if $f|_{U_\alpha}$ is $C^r_\K$ for every member of an open cover $(U_\alpha)_{\alpha}$ of its domain,
then $f$  is $C^r_\K$. (See \cite[pp. 51-52]{Glock02} for the case of $C^\omega_\R$, the other cases are clear by definition.)
In addition, the composition of $C^r_\K$-maps (if possible) is again a $C^r_\K$-map (cf. \cite[Propositions 2.7 and 2.9]{Glock02}).

\begin{setup}[$C^r$-Manifolds and $C^r$-mappings between them]
For $r \in \N_0 \cup \{\infty, \omega\}$, manifolds modeled on a fixed locally convex space can be defined as usual.
The model space of a locally convex manifold and the manifold as a topological space will always be assumed to be Hausdorff spaces.
However, we will not assume that infinite-dimensional manifolds are second countable or paracompact.
We say $M$ is a \emph{Banach} (or \emph{Fr\'{e}chet}) manifold if all its modeling spaces are Banach (or
Fr\'{e}chet) spaces.
\end{setup}
\begin{setup}
Direct products of locally convex manifolds, tangent spaces and tangent bundles as well as $C^r$-maps between manifolds may be defined as in the finite-dimensional setting.
For $C^r$-manifolds $M,N$ we use the notation $C^r(M,N)$ for the set of all $C^r$-maps from $M$ to $N$.
Furthermore, we define \emph{locally convex Lie groups} as groups with a $C^\infty$-manifold structure turning the group operations into $C^\infty$-maps.
\end{setup}

\begin{setup}\label{defn:CRS}
Let $E_1$, $E_2$ and $F$ be locally convex spaces and $f\colon U\times V\to F$
be a mapping on a product of locally convex subsets $U\subseteq E_1$ and $V\subseteq E_2$
be open. Given $k,\ell\in\N_0\cup\{\infty\}$, we say that $f$ is $C^{k,\ell}$
if $f$ is continuous and there exist continuous mappings
$d^{\,(i,j)}f\colon U\times V\times E_1^i\times E_2^j\to F$
for all $i,j\in\N_0$ such that $i\leq k$ and $j\leq \ell$
such that
\[
d^{\,(i,j)}f(x,y,v_1,\ldots,v_i,w_1,\ldots,w_j)=(D_{(v_i,0)}\cdots D_{(v_1,0)}
D_{(0,w_j)}\cdots D_{(0,w_1)}f)(x,y)
\]
for all $x\in U^0$, $y\in V^0$ and $v_1,\ldots, v_i\in E_1$,
$w_1,\ldots, w_j\in E_2$.
\end{setup}
One can also define $C^{k,\ell}$-maps $M_1\times M_2\to N$ if~$M_1$ is a $C^k$-manifold, $M_2$ a $C^\ell$-manifold and~$N$ a $C^{k+\ell}$-manifold,
checking the property in local charts. Moreover, one can also consider mappings which are complex differentiable in one component and only real differentiable in the other. 
We refer to \cite{AS15} for more information. 

\begin{rem}
 The calculus used in the present article is also known as ``Bastiani-calculus''. It is not the only valid choice of calculus beyond the realm of Banach spaces. An alternative approach is the so called ``convenient calculus'' (see \cite{KaM97}). In general both calculi do not coincide outside of \Frechet~spaces (where the Bastiani calculus is stronger as it requires continuity of the underlying mappings). However, in the present paper we are exclusively working with \Frechet~spaces and manifolds modeled on \Frechet~spaces. Thus there is no difference between our calculus and the convenient calculus in this setting.  
\end{rem}

\end{appendix}

\addcontentsline{toc}{section}{References}
\bibliography{BMS_project_v2}

\newcommand{\etalchar}[1]{$^{#1}$}
\def\polhk#1{\setbox0=\hbox{#1}{\ooalign{\hidewidth
  \lower1.5ex\hbox{`}\hidewidth\crcr\unhbox0}}}
\begin{thebibliography}{{McC}72b}
\providecommand{\url}[1]{\texttt{#1}}
\providecommand{\urlprefix}{URL }
\expandafter\ifx\csname urlstyle\endcsname\relax
  \providecommand{\doi}[1]{doi:\discretionary{}{}{}#1}\else
  \providecommand{\doi}{doi:\discretionary{}{}{}\begingroup
  \urlstyle{rm}\Url}\fi
\providecommand{\eprint}[2][]{\url{#2}}

\bibitem[A{\etalchar{+}}16]{Abbott_et-al}
Abbott, B.~P. et~al.
\newblock \emph{{Observation of Gravitational Waves from a Binary Black Hole
  Merger}}.
\newblock Phys. Rev. Lett. \textbf{116} (2016)(6):061102.
\newblock \doi{10.1103/PhysRevLett.116.061102}.
\newblock ArXiv:1602.03837 [gr-qc]

\bibitem[AE18]{Alessio:2017lps}
Alessio, F. and Esposito, G.
\newblock \emph{{On the structure and applications of the
  Bondi\textendash{}Metzner\textendash{}Sachs group}}.
\newblock Int. J. Geom. Meth. Mod. Phys. \textbf{15} (2018)(02):1830002.
\newblock \doi{10.1142/S0219887818300027}.
\newblock ArXiv:1709.05134 [gr-qc]

\bibitem[AGS20]{AaGaS20}
{Amiri}, H., {Gl\"ockner}, H. and {Schmeding}, A.
\newblock \emph{{Lie groupoids of mappings taking values in a Lie groupoid}}.
\newblock {Arch. Math., Brno} \textbf{56} (2020)(5):307--356.
\newblock ArXiv:1811.02888 [math.DG]

\bibitem[AS81]{Ashtekar_Streubel}
Ashtekar, A. and Streubel, M.
\newblock \emph{{Symplectic Geometry of Radiative Modes and Conserved
  Quantities at Null Infinity}}.
\newblock Proc. Roy. Soc. Lond. A \textbf{376} (1981):585--607.
\newblock \doi{10.1098/rspa.1981.0109}

\bibitem[AS15]{AS15}
Alzaareer, H. and Schmeding, A.
\newblock \emph{Differentiable mappings on products with different degrees of
  differentiability in the two factors}.
\newblock Expo. Math. \textbf{33} (2015)(2):184--222.
\newblock \doi{10.1016/j.exmath.2014.07.002}.
\newblock ArXiv:1208.6510 [math.FA]

\bibitem[Ash14]{Ashtekar}
Ashtekar, A.
\newblock \emph{{Geometry and Physics of Null Infinity}} 2014.
\newblock ArXiv:1409.1800 [gr-qc]

\bibitem[BBM62]{Bondi_VdBurg_Metzner}
Bondi, H., van~der Burg, M. G.~J. and Metzner, A. W.~K.
\newblock \emph{{Gravitational waves in general relativity. 7. Waves from
  axisymmetric isolated systems}}.
\newblock Proc. Roy. Soc. Lond. A \textbf{269} (1962):21--52.
\newblock \doi{10.1098/rspa.1962.0161}

\bibitem[BS71a]{BS71b}
Bochnak, J. and Siciak, J.
\newblock \emph{Analytic functions in topological vector spaces}.
\newblock Studia Math. \textbf{39} (1971):77--112

\bibitem[BS71b]{BS71a}
Bochnak, J. and Siciak, J.
\newblock \emph{Polynomials and multilinear mappings in topological vector
  spaces}.
\newblock Studia Math. \textbf{39} (1971):59--76

\bibitem[BT10a]{Barnich:2010eb}
Barnich, G. and Troessaert, C.
\newblock \emph{{Aspects of the BMS/CFT correspondence}}.
\newblock JHEP \textbf{05} (2010):062.
\newblock \doi{10.1007/JHEP05(2010)062}.
\newblock ArXiv:1001.1541 [hep-th]

\bibitem[BT10b]{Barnich:2011ct}
Barnich, G. and Troessaert, C.
\newblock \emph{{Supertranslations call for superrotations}}.
\newblock PoS \textbf{CNCFG2010} (2010):010.
\newblock \doi{10.22323/1.127.0010}.
\newblock ArXiv:1102.4632 [gr-qc]

\bibitem[BT10c]{Barnich:2009se}
Barnich, G. and Troessaert, C.
\newblock \emph{{Symmetries of asymptotically flat 4 dimensional spacetimes at
  null infinity revisited}}.
\newblock Phys. Rev. Lett. \textbf{105} (2010):111103.
\newblock \doi{10.1103/PhysRevLett.105.111103}.
\newblock ArXiv:0909.2617 [gr-qc]

\bibitem[BT11]{Barnich:2011mi}
Barnich, G. and Troessaert, C.
\newblock \emph{{BMS charge algebra}}.
\newblock JHEP \textbf{12} (2011):105.
\newblock \doi{10.1007/JHEP12(2011)105}.
\newblock ArXiv:1106.0213 [hep-th]

\bibitem[CL14]{Campiglia:2014yka}
Campiglia, M. and Laddha, A.
\newblock \emph{{Asymptotic symmetries and subleading soft graviton theorem}}.
\newblock Phys. Rev. D \textbf{90} (2014)(12):124028.
\newblock \doi{10.1103/PhysRevD.90.124028}.
\newblock ArXiv:1408.2228 [hep-th]

\bibitem[CL15a]{Campiglia:2015kxa}
Campiglia, M. and Laddha, A.
\newblock \emph{{Asymptotic symmetries of gravity and soft theorems for massive
  particles}}.
\newblock JHEP \textbf{12} (2015):094.
\newblock \doi{10.1007/JHEP12(2015)094}.
\newblock ArXiv:1509.01406 [hep-th]

\bibitem[CL15b]{Campiglia:2015yka}
Campiglia, M. and Laddha, A.
\newblock \emph{{New symmetries for the Gravitational {S}-matrix}}.
\newblock JHEP \textbf{04} (2015):076.
\newblock \doi{10.1007/JHEP04(2015)076}.
\newblock ArXiv:1502.02318 [hep-th]

\bibitem[Dah11]{dahmen2011}
Dahmen, R.
\newblock \emph{{D}irect limit constructions in infinite dimensional {L}ie
  theory}.
\newblock Ph.D. thesis, {U}niversity of {P}aderborn 2011.
\newblock Urn:nbn:de:hbz:466:2-239

\bibitem[{Dar}73]{Dar73}
{Darst}, R.~B.
\newblock \emph{{Most infinitely differentiable functions are nowhere
  analytic}}.
\newblock {Can. Math. Bull.} \textbf{16} (1973):597--598

\bibitem[DS15]{DaS15}
{Dahmen}, R. and {Schmeding}, A.
\newblock \emph{{The Lie group of real analytic diffeomorphisms is not real
  analytic}}.
\newblock {Stud. Math.} \textbf{229} (2015)(2):141--172.
\newblock ArXiv:1410.8803 [math.DG]

\bibitem[Fri18]{Friedrich:2017cjg}
Friedrich, H.
\newblock \emph{{Peeling or not peeling --- is that the question?}}
\newblock Class. Quant. Grav. \textbf{35} (2018)(8):083001.
\newblock \doi{10.1088/1361-6382/aaafdb}.
\newblock ArXiv:1709.07709 [gr-qc]

\bibitem[Gl{\"o}02]{Glock02}
Gl{\"o}ckner, H.
\newblock \emph{Infinite-dimensional {L}ie groups without completeness
  restrictions}.
\newblock In \emph{Geometry and analysis on finite- and infinite-dimensional
  {L}ie groups ({B}\polhk edlewo, 2000)}, \emph{Banach Center Publ.}, vol.~55,
  pp. 43--59 (Polish Acad. Sci., Warsaw, 2002)

\bibitem[Gl{\"{o}}15a]{glo15}
Gl{\"{o}}ckner, H.
\newblock \emph{Measurable regularity properties of infinite-dimensional lie
  groups} 2015.
\newblock ArXiv:1601.02568 [math.FA]

\bibitem[Gl{\"{o}}15b]{hg2015}
Gl{\"{o}}ckner, H.
\newblock \emph{{R}egularity properties of infinite-dimensional {L}ie groups,
  and semiregularity} 2015.
\newblock ArXiv:1208.0715 [math.FA]

\bibitem[GP74]{Girardello:1974sq}
Girardello, L. and Parravicini, G.
\newblock \emph{{Continuous spins in the bondi-metzner-sachs group of
  asymptotic symmetry in general relativity}}.
\newblock Phys. Rev. Lett. \textbf{32} (1974):565--568.
\newblock \doi{10.1103/PhysRevLett.32.565}

\bibitem[{Gra}88]{Grabowski}
{Grabowski, J.}
\newblock \emph{{F}ree subgroups of diffeomorphism groups}.
\newblock Fundamenta Math. \textbf{131} (1988):103--121

\bibitem[{Han}20]{Han20}
{Hanusch}, M.
\newblock \emph{{The strong Trotter property for locally \(\mu\)-convex Lie
  groups}}.
\newblock {J. Lie Theory} \textbf{30} (2020)(1):25--32.
\newblock ArXiv:1802.08923 [math.FA]

\bibitem[HE73]{Hawking_Ellis}
Hawking, S.~W. and Ellis, G. F.~R.
\newblock \emph{{The Large Scale Structure of Space-Time}}.
\newblock Cambridge Monographs on Mathematical Physics (Cambridge University
  Press, 1973).
\newblock \doi{10.1017/CBO9780511524646}

\bibitem[HHPB17]{HaHaPaB17}
Haco, S.~J., Hawking, S.~W., Perry, M.~J. and Bourjaily, J.~L.
\newblock \emph{{The Conformal BMS Group}}.
\newblock JHEP \textbf{11} (2017):012.
\newblock \doi{10.1007/JHEP11(2017)012}.
\newblock ArXiv:1701.08110 [hep-th]

\bibitem[HLMS15]{He_et-al}
He, T., Lysov, V., Mitra, P. and Strominger, A.
\newblock \emph{{BMS supertranslations and Weinberg\textquoteright{}s soft
  graviton theorem}}.
\newblock JHEP \textbf{05} (2015):151.
\newblock \doi{10.1007/JHEP05(2015)151}.
\newblock ArXiv:1401.7026 [hep-th]

\bibitem[HN12]{HaN12}
Hilgert, J. and Neeb, K.-H.
\newblock \emph{Structure and Geometry of Lie Groups}.
\newblock Springer Monographs in Mathematics (Springer, 2012).
\newblock \doi{10.1007/978-0-387-84794-8}

\bibitem[KM97]{KaM97}
{Kriegl}, A. and {Michor}, P.~W.
\newblock \emph{{The convenient setting of global analysis}}, vol.~53
  (Providence, RI: American Mathematical Society, 1997)

\bibitem[{Kna}02]{Kna02}
{Knapp}, A.~W.
\newblock \emph{{Lie groups beyond an introduction. 2nd ed}}, vol. 140 (Boston,
  MA: Birkh\"auser, 2002), 2nd ed. edn.

\bibitem[KW09]{KaW09}
{Khesin}, B. and {Wendt}, R.
\newblock \emph{{The geometry of infinite-dimensional groups}} (Berlin:
  Springer, 2009)

\bibitem[{McC}72a]{McCar72b}
{McCarthy}, P.~J.
\newblock \emph{{Representations of the Bondi-Metzner-Sachs group. I:
  Determination of the representations}}.
\newblock {Proc. R. Soc. Lond., Ser. A} \textbf{330} (1972):517--535

\bibitem[{McC}72b]{McCar72a}
{McCarthy}, P.~J.
\newblock \emph{{Structure of the Bondi-Metzner-Sachs group}}.
\newblock {J. Math. Phys.} \textbf{13} (1972):1837--1842

\bibitem[{McC}92]{McCar92}
{McCarthy}, P.~J.
\newblock \emph{{Real and complex asymptotic symmetries in quantum gravity,
  irreducible representations, polygons, polyhedra, and the \(A, D, E\)
  series}}.
\newblock {Philos. Trans. R. Soc. Lond., Ser. A} \textbf{338}
  (1992)(1650):271--299

\bibitem[Mel17]{Melas:2017jzb}
Melas, E.
\newblock \emph{{On the representation theory of the
  Bondi\textendash{}Metzner\textendash{}Sachs group and its variants in three
  space\textendash{}time dimensions}}.
\newblock J. Math. Phys. \textbf{58} (2017)(7):071705.
\newblock \doi{10.1063/1.4993198}.
\newblock ArXiv:1703.05980 [math-ph]

\bibitem[Mic80]{Michor80}
Michor, P.~W.
\newblock \emph{Manifolds of differentiable mappings}, \emph{Shiva Mathematics
  Series}, vol.~3 (Shiva Publishing Ltd., Nantwich, 1980)

\bibitem[{Mil}84]{Milnor}
{Milnor, J.}
\newblock \emph{{R}emarks on infinite dimensional {L}ie groups}.
\newblock Relativity, Groups, and Topology II, Les Houches, 1983, B.S. DeWitt,
  R. Stora, Eds., Elsevier, Amsterdam  (1984)

\bibitem[MW16]{Maedler_Winicour}
Mädler, T. and Winicour, J.
\newblock \emph{{Bondi-Sachs Formalism}}.
\newblock Scholarpedia \textbf{11} (2016):33528.
\newblock \doi{10.4249/scholarpedia.33528}.
\newblock ArXiv:1609.01731 [gr-qc]

\bibitem[Nee06]{Neeb06}
Neeb, K.-H.
\newblock \emph{Towards a {L}ie theory of locally convex groups}.
\newblock Jpn. J. Math. \textbf{1} (2006)(2):291--468.
\newblock \doi{10.1007/s11537-006-0606-y}.
\newblock ArXiv:1501.06269 [math.RT]

\bibitem[New89]{Newman}
Newman, R.
\newblock \emph{{The global structure of simple space-times}}.
\newblock Commun. Math. Phys. \textbf{123} (1989):17--52.
\newblock \doi{https://doi.org/10.1007/BF01244016}

\bibitem[{Noe}18]{Noether}
{Noether, E.}
\newblock \emph{{I}nvariante {V}ariationsprobleme}.
\newblock Gött. Nachr.  (1918):235--257.
\newblock ArXiv:physics/0503066 [physics.hist-ph]

\bibitem[NS13]{NaS13}
{Neeb}, K.-H. and {Salmasian}, H.
\newblock \emph{{Differentiable vectors and unitary representations of
  Fr\'echet-Lie supergroups}}.
\newblock {Math. Z.} \textbf{275} (2013)(1-2):419--451.
\newblock ArXiv:1208.2639 [math.RT]

\bibitem[{Oli}02]{Oli02}
{Oliva}, W.~M.
\newblock \emph{{Geometric mechanics}}, vol. 1798 (Berlin: Springer, 2002)

\bibitem[Pen63]{Penrose_1}
Penrose, R.
\newblock \emph{{A}symptotic properties of fields and space-times}.
\newblock Phys. Rev. Lett. \textbf{10} (1963):66--68

\bibitem[Pen64]{Penrose_2}
Penrose, R.
\newblock \emph{{C}onformal treatment of infinity}.
\newblock Relativity, groups and topology: the 1963 Les Houches lectures
  (1964).
\newblock Republished in Gen. Relativ. Gravit. 43:901--922, 2011.

\bibitem[Pen65]{Penrose_3}
Penrose, R.
\newblock \emph{{Z}ero rest mass fields including gravitation: {A}symptotic
  behavior}.
\newblock Proc. Roy. Soc. Lond. A \textbf{284} (1965):159

\bibitem[Pen68]{Penrose_4}
Penrose, R.
\newblock \emph{{S}tructure of space-time}.
\newblock In \emph{Battelle Rencontres}, pp. 121--235 (1968)

\bibitem[Pri21]{Prinz}
Prinz, D.
\newblock \emph{{G}ravity-{M}atter {F}eynman {R}ules for any {V}alence}.
\newblock Class. Quantum Grav. \textbf{38} (2021):215003.
\newblock ArXiv:2004.09543 [hep-th]

\bibitem[PS21]{Prinz_Schmeding_2}
Prinz, D. and Schmeding, A.
\newblock \emph{{L}ie {T}heory for {A}symptotic {S}ymmetries in {G}eneral
  {R}elativity: {T}he {NU} {G}roup} 2021.
\newblock ArXiv:2109.11476 [gr-qc]

\bibitem[Ruz20]{Ruzz20}
Ruzziconi, R.
\newblock \emph{{On the Various Extensions of the BMS Group}}.
\newblock Ph.D. thesis, Brussels U. and Intl. Solvay Inst., Brussels 2020.
\newblock ArXiv:2009.01926 [hep-th]

\bibitem[Sac62]{Sachs}
Sachs, R.~K.
\newblock \emph{{Gravitational waves in general relativity. 8. Waves in
  asymptotically flat space-times}}.
\newblock Proc. Roy. Soc. Lond. A \textbf{270} (1962):103--126.
\newblock \doi{10.1098/rspa.1962.0206}

\bibitem[{Sch}08]{Schott08}
{Schottenloher}, M.
\newblock \emph{{A mathematical introduction to conformal field theory. Based
  on a series of lectures given at the Mathematisches Institut der
  Universit\"at Hamburg. 2nd revised ed}}, vol. 759 (Berlin: Springer, 2008),
  2nd revised ed. edn.

\bibitem[Sch15]{Sch15}
Schmeding, A.
\newblock \emph{The diffeomorphism group of a non-compact orbifold}.
\newblock Dissertationes Math. \textbf{507} (2015):179.
\newblock \doi{10.4064/dm507-0-1}.
\newblock ArXiv:1301.5551 [math.GR]

\bibitem[Str14]{Strominger}
Strominger, A.
\newblock \emph{{On BMS Invariance of Gravitational Scattering}}.
\newblock JHEP \textbf{07} (2014):152.
\newblock \doi{10.1007/JHEP07(2014)152}.
\newblock ArXiv:1312.2229 [hep-th]

\bibitem[SWS75]{Schmidt_Walker_Sommers}
Schmidt, B., Walker, M. and Sommers, P.
\newblock \emph{{A characterization of the Bondi-Metzner-Sachs group}}.
\newblock Gen. Relat. Gravit. \textbf{6} (1975):489--497.
\newblock \doi{https://doi.org/10.1007/BF00762453}

\bibitem[Wei65]{Weinberg}
Weinberg, S.
\newblock \emph{{Infrared photons and gravitons}}.
\newblock Phys. Rev. \textbf{140} (1965):B516--B524.
\newblock \doi{10.1103/PhysRev.140.B516}

\end{thebibliography}
\end{document}